\documentclass[12pt,a4paper]{article}
\usepackage{xcolor,paralist}
\usepackage[pdftex,colorlinks=true,hypertexnames=false]{hyperref}
\definecolor{darkblue}{rgb}{0,0,.6}
\hypersetup{citecolor=darkblue,linkcolor=darkblue,urlcolor=darkblue}
\usepackage{amssymb,enumitem,booktabs,subfig,setspace,bm,longtable,dsfont,orcidlink}
\usepackage[final,nopatch=footnote]{microtype}
\usepackage[top=1in, bottom=1in, left=1in, right=1in]{geometry}
\usepackage[authoryear]{natbib}
\usepackage[thinc]{esdiff}
\usepackage{comment,mathtools,longtable}
\usepackage{float,fancyvrb}
\usepackage{url}
\usepackage{booktabs}

\usepackage{makecell} 
\usepackage{amsmath}
\usepackage{amsfonts}
\usepackage{epsfig}
\usepackage{graphics}
\usepackage[normalem]{ulem}
\usepackage{palatino,eulervm}
\usepackage{graphicx}
\usepackage{lscape}
\usepackage{rotating}
\usepackage[linewidth=1pt]{mdframed}
\allowdisplaybreaks[4]
\DeclareMathOperator*{\argmin}{arg\,min}

\providecommand{\U}[1]{\protect\rule{.1in}{.1in}}

\graphicspath{{plots/}}
\newsavebox\CBox

\usepackage{amsthm,thmtools}
\usepackage{mathrsfs}

\makeatletter
\def\th@newremark{\th@remark\thm@headfont{\bfseries}}
\makeatletter
\theoremstyle{newremark}

\newtheorem{prop}{Proposition}

\declaretheoremstyle[spaceabove=6pt, spacebelow=6pt, headfont=\bfseries, notefont=\mdseries, notebraces={(}{)}, bodyfont=\normalfont, postheadspace=0.5em]{mystyle}

\newcommand{\X}{\mathcal{X}}
\newcommand{\Y}{\mathcal{Y}}

\newcommand{\Z}{\mathcal{Z}}
\bibpunct{(}{)}{;}{a}{,}{ and }

\makeatletter
\newcommand*{\addFileDependency}[1]{
\typeout{(#1)}
\@addtofilelist{#1}
\IfFileExists{#1}{}{\typeout{No file #1.}}
}\makeatother

\newcommand{\Rlogo}{\protect\includegraphics[height=1.8ex,keepaspectratio]{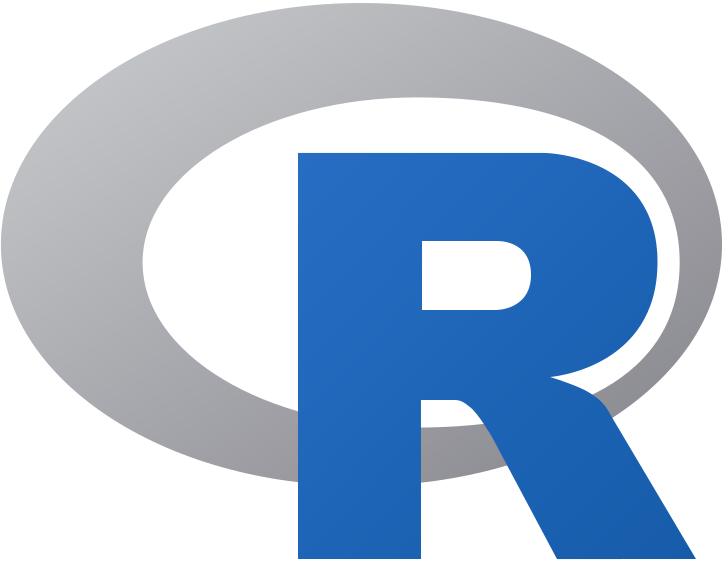}}

\begin{document}

\title{Interpretable models for forecasting high-dimensional functional time series}

    \author{\normalsize Han Lin Shang \orcidlink{0000-0003-1769-6430} \\
\normalsize Department of Actuarial Studies and Business Analytics \\
\normalsize Macquarie University \\
\\
\normalsize  Cristian F. Jim\'{e}nez-Var\'{o}n \orcidlink{0000-0001-7471-3845}\footnote{Department of Mathematics, Ian Wand Building, University of York, Deramore Lane, York YO10 5GH, United Kingdom; Telephone: +44 (0)1904 321629; Email: cristian.jimenezvaron@york.ac.uk}\\
\normalsize Department of Mathematics \\
\normalsize University of York
}

\date{}

\maketitle

\centerline{\bf Abstract}
We study the modeling and forecasting of high-dimensional functional time series, which can be temporally dependent and cross-sectionally correlated. Central to our implementation is a functional analysis of variance by decomposing high-dimensional functional time series, such as subnational age- and sex-specific mortality observed over years, into two distinct components: a deterministic mean structure and a residual process varying over time. Unlike purely statistical dimensionality-reduction techniques, the functional analysis of variance decomposition provides an interpretable framework by partitioning the series into effects attributable to data-specific factors, such as regional and sex-level variations, and a grand functional mean. From the residual process, we implement a functional factor model to capture the remaining stochastic trends. By combining the forecasts of the residual component with the estimated deterministic structure, we obtain the forecasted curves for high-dimensional functional time series. Illustrated by the age-specific Japanese subnational mortality rates from 1975 to 2023, we evaluate and compare the accuracy of the point and interval forecasts across various forecast horizons. The results demonstrate that leveraging these interpretable components not only clarifies the underlying drivers of the data, but also improves point forecast accuracy by about 25\% to 45\% compared to an existing method, providing more transparent insights for evidence-based policy decisions, such as accurate modeling of financial costs of length of stay in the old-aged care facilities.

\vspace{.02in}
\noindent Keywords: one-way functional analysis of variance; two-way functional analysis of variance; conformal prediction; functional factor model; functional panel data; subnational age-specific mortality rates

\newpage
\setstretch{1.5}

\section{Introduction}\label{sec:1}

Recent advances in computer storage and recording facilitate the presence of functional data across several scientific fields, from cosmic demographics \citep[see, e.g.,][]{LBK+25} to small-area official estimation \citep[see, e.g.,][]{MSZ16}. When a functional variable is measured over time, it gives rise to functional time series analysis \citep[see][for a comprehensive review]{HK12, KR17}. When one observes multiple functional variables, the crux of the problem is to capture temporal dependence: each series has its own temporal dependence; together, they also exhibit cross-sectional dependence. That gives rise to high-dimensional functional time series (HDFTS) \citep[see, e.g.,][]{GSY19, CFQ+25}, also known as functional panel data \citep[see, e.g.,][]{JSS25}. Applications include temperature curves in hundreds of weather stations \citep[see, e.g.,][]{Delaigle2019}; return curves in finance are typically available for hundreds of stocks \citep[see, e.g.,][]{LLS+26}. The existing methods, such as factor-based dimensionality reduction approaches, work well in forecast accuracy, they lack  interpretability, an issue that we intend to address in this paper.

HDFTS can be expressed as $\bm{\Z}_{t} = (\Z_{1t},\dots,\Z_{Nt})^{\top}$, $i=1,\dots, N$, $t=1,\dots,T$, with $\Z_{it} = (\Z_{it}(u), u\in \mathcal{C}_i)\in \mathcal{H}_i$, where $\mathcal{H}_i$ is a Hilbert space defined as a set of measurable and square-integrable functions on a compact set $\mathcal{C}_i$. A distinct feature of HDFTS is that the number of cross sections can exceed the number of curves, i.e., $N>T$.In our example, $N = 47\times 2$, which is larger than the sample size $T=49$.

In the HDFTS literature, \cite{ZD23} derived Gaussian and multiplier bootstrap approximations for the sums of HDFTS. Using these approximations, they constructed joint simultaneous confidence bands for the mean functions and developed a hypothesis test to assess whether the mean functions in the panel dimension exhibit parallel behavior. \cite{TSY22} studied clustering HDFTS, while \cite{LLS24} proposed hypothesis tests for the detection and estimation of change points, and further clustering of common change points in HDFTS using an information criterion. With a two-stage representation, \cite{HG18} applied a univariate functional principal component analysis to extract a set of principal components and their associated scores; by stacking the scores, one can further reduce dimensionality through principal component analysis. In the same vein, \cite{GSY19} stacked the principal component scores across all populations by their eigenvalue orders; using a factor model, the dimensionality can be further reduced. The above works focus on statistical inference; we aim to contribute a new modeling technique for HDFTS and to adapt a predictive inference framework, namely conformal prediction, for constructing pointwise prediction intervals in this paper. 

Relating to our work, \cite{HNT23} investigated the representation of HDFTS using a factor model, identifying conditions on the eigenvalues of the covariance operator crucial for establishing the existence and uniqueness of the factor model. In estimating HDFTS models, \cite{TNH23} developed a functional factor model with functional factor loadings and a matrix of real-valued factors. With real-valued factor loadings and functional factors, \cite{GQW+26} studied a different factor representation for decomposing HDFTS. By unifying both factor models, \cite{LLS+26} considered functional factor loadings and functional factors. 

While these factor-based approaches are computationally efficient for dimensionality reduction, a significant challenge in their widespread application is that the resulting factors and loading matrices are often difficult to interpret \citep{Liu2026}. In many high-dimensional settings, the latent factors are purely mathematical constructs that do not directly correspond to observable categorical drivers, such as geographic regions or demographic cohorts. This ``black-box'' nature can limit the utility of the model for practitioners who demand an understanding of the specific sources of variation, such as sex-specific trends or regional disparities, that drive the overall series. In our procedure, the functional ANOVA is first deployed to exactly decompose the HDFTS into various mean terms and residuals.

The FANOVA models are not new, since they are particularly useful for analyzing data across a range of applications, such as human tactile perception \citep{SME03}, menstrual cycle data \citep{BR98}, and circadian rhythms with random effects and smoothing-spline analysis of variance decomposition \citep{WKB03}. \cite{KS10} established a Bayesian framework for FANOVA modeling to estimate the effect of geographical regions on Canadian temperatures. \cite{SG12} proposed a functional median polish modeling as an extension of the univariate median polish of \cite{Tukey77}, and a functional rank test was developed to determine the significance of the effects of the functional main factor. While \cite{Shang25} considered a one-way FANOVA for modeling and forecasting a time series of Lorenz curves, \cite{JSS24} proposed a two-way FANOVA to model and forecast subnational age-specific mortality observed over time.

We introduce a structural approach that prioritizes parameter interpretability by singling out the observed categorical drivers without sacrificing the flexibility of factor modeling. Following earlier work by \cite{JSS24}, we consider a two-way functional analysis of variance (FANOVA) to decompose an HDFTS. The FANOVA models evaluate the functional effects of categorical variables (known as factors) by determining how functions differ across their levels. Differing from \cite{JSS24}, our two-way FANOVA incorporates an interaction term that accounts for dependence between the two categorical variables, namely gender and prefecture. This interaction term is subsequently modeled via a one-way FANOVA, thereby providing a more flexible representation of cross-effect relationships than existing approaches. 

Regarding our age- and sex-specific subnational mortality rates, our modeling framework consists of three steps:
\begin{inparaenum}[(I)]
\item we begin with the two-way FANOVA to model mean patterns associated with region and sex; 
\item the interaction term in the two-way FANOVA can be captured by one-way FANOVA; 
\item the sex-specific residuals across regions over time from the one-way FANOVA are then modeled through a functional factor model.
\end{inparaenum}
In addition, we want to examine whether or not the interaction term in step~(II) can enhance point and interval forecast accuracy.

In this paper, we advance the HDFTS literature by proposing an interpretable framework that separates deterministic and stochastic variation. We first apply a two-way FANOVA to decompose HDFTS into data-driven functional components associated with observable factors, such as region and sex. We then model group-specific interactions through a one-way FANOVA and capture the remaining time-varying dynamics using a functional factor model. By combining the interpretability of FANOVA with the flexibility of functional factor models, the proposed approach provides clear insights into cross-group heterogeneity. An application to Japanese age- and sex-specific subnational mortality data demonstrates that the method improves both point and interval forecast accuracy relative to existing approaches.

The remainder of the paper is organized as follows. Section~\ref{sec:2} introduces the data and provides a preliminary analysis. Section~\ref{sec:3} presents the proposed decomposition framework, including the two-way and one-way FANOVA and a functional factor model. Section~\ref{sec:4} outlines the construction of prediction intervals. Section~\ref{sec:5} discusses model fitting, while Section~\ref{sec:6} evaluates forecasting performance. Section~\ref{sec:7} provides concluding remarks and discusses future research directions.

\section{Japanese subnational age-specific mortality rates}\label{sec:2}

In many developed countries, such as Japan, increasing longevity and population aging have raised concerns about the sustainability of pension, health and aged-care systems \citep[see, e.g.,][]{Coulmas07}. These challenges have intensified the need for accurate age-specific mortality modeling and forecasting. While national mortality forecasts are important, subnational forecasts are essential for informing regional policy and allocating current and future resources, such as aged-care facilities. Despite their value for capturing regional heterogeneity, subnational mortality data are often of poor quality.

Obtained from \cite{JMD26}, we investigate subnational age- and sex-specific mortality rates in Japan from 1975 to 2023. We acknowledge that the COVID-19 years (2020-2023) may cause a structural shock to mortality patterns and affect forecast accuracy for those years. Due to data availability, this shock can gradually be observed with updated data.

Mortality rates are the ratios of registered death counts to population exposure in the relevant year for the given age (based on a one-year age group). We examine age groups ranging from 0 to 94 in a single year of age, and the last age group including all ages at and above~95. In Figures~\ref{fig:1a} and~\ref{fig:1b}, we present a rainbow plot of age- and sex-specific mortality in Okinawa, the most southern prefecture in Japan. Although a decreasing trend can be observed, it is evident that the noise level of such a series is quite high.
\begin{figure}[!htb]
\centering
\subfloat[Raw data]
{\includegraphics[width=8.56cm]{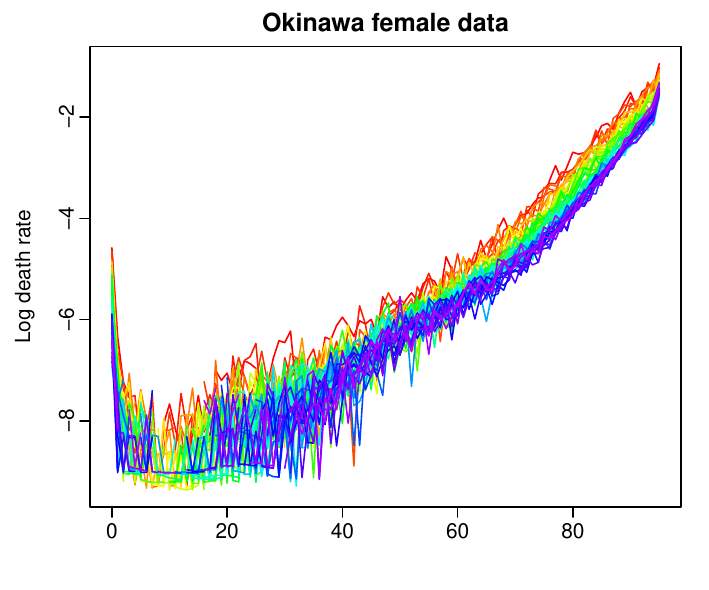}\label{fig:1a}}
\quad
\subfloat[Raw data]
{\includegraphics[width=8.56cm]{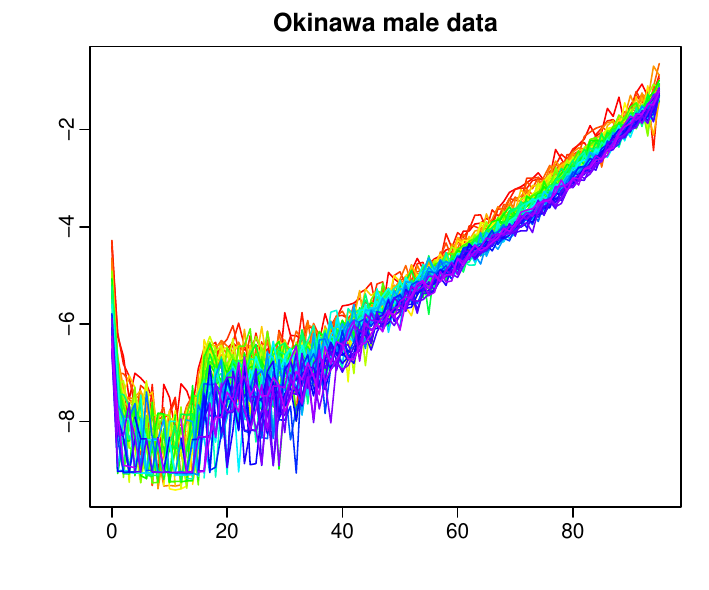}\label{fig:1b}}
\\
\subfloat[Smoothed data]
{\includegraphics[width=8.56cm]{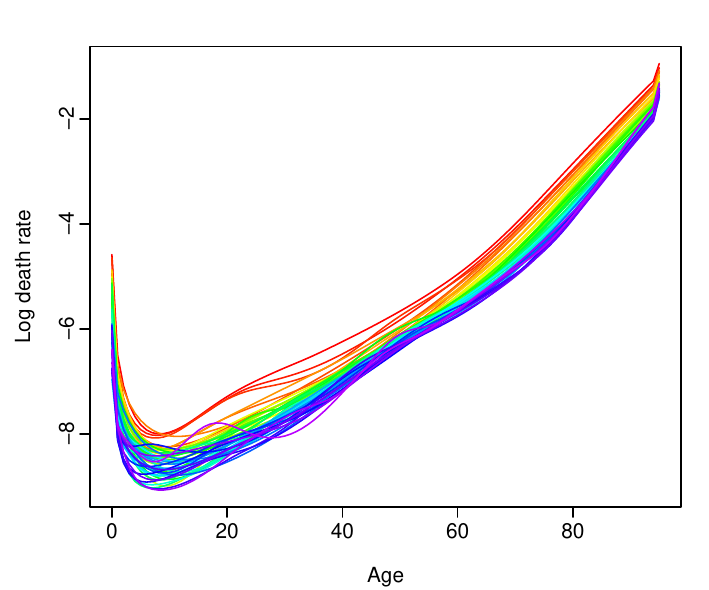}\label{fig:1c}}
\quad
\subfloat[Smoothed data]
{\includegraphics[width=8.56cm]{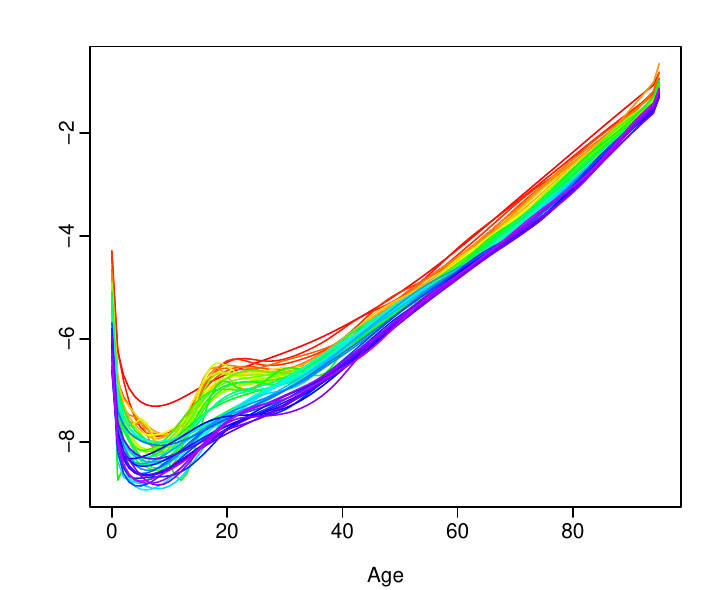}\label{fig:1d}}
\caption{\small Age-specific raw and smoothed $\log$ mortality rates for ages 0 to 95+ between 1975 and 2023 in Okinawa. Curves are ordered chronologically according to the colors of the rainbow. The oldest years are shown in red, with the most recent years in violet. The left vertical axis measures log mortality rates. Especially for the male series}, mortality rates dip in early childhood, climb in the teen years, stabilize in the early 20s, and then steadily increase with age. \label{fig:1}
\end{figure}

These multiple functional time series are observed with error at times $t=1, 2,\dots,T$ and we aim to forecast the functions for times $t=T+1,\dots, T+h$, where $h$ denotes a forecast horizon. Let ${\Z^g_{it}(u_j)}$ denote mortality rates observed discretely at ages $(u_1,\dots,u_p)$ for gender $g=\{F,M\}$. We assume underlying smooth $L_2$ functions ${\Y^g_{it}(u_j)}$ such that
\begin{equation*}
\ln{\Z^g_{it}(u_j)} = \ln{\Y^g_{it}(u_j)} + \sigma^g_t(u_j)\varepsilon^g_{it,j},\qquad j=1, 2,\dots,p.
\end{equation*}
where $\ln(\cdot)$ represents the natural logarithm, $\{\varepsilon^g_{it,j}\}$ are independent and identically distributed variables with zero mean and unit variance, and $\sigma^g_t(u_j)$ allows heteroskedasticity; for different ages, the variance term may be different. The smoothing step helps improve point forecast accuracy. Since conformal prediction is model-agnostic, it is insensitive to whether the original or smoothed log mortality rates are used. Due to the increasing monotonicity in mortality beyond age 65, we consider a $P$-spline with monotonic constraint \citep{Wood94}. Computationally, the \verb|smooth.demogdata| function in the demography package with the default parameters in \Rlogo \citep{Hyndman25} is used. This function is capable of modeling sparsely observed data.

In Figures~\ref{fig:1c} and~\ref{fig:1d}, we display smoothed $\log$ mortality rates for Okinawa. Not only do smoothing techniques help address missing values and mortality rates that exceed one, but they also enhance forecast accuracy \citep[see, e.g.,][]{HS09}.

\section{Decomposition of high-dimensional functional time series}\label{sec:3}

\subsection{Two-way functional analysis of variance (TWA)}\label{sec:3.1}

Through a two-way FANOVA, smoothed log mortality rates, $\ln{\Y^g_{it}(u_j)}$, can be decomposed as 
\begin{equation}
\ln{\Y^g_{it}(u)} = \mu(u) + \alpha_i(u) + \delta^g(u) + \gamma_{ig}(u) + \X_{it}^g(u),\label{eq:1}
\end{equation}
where $\Y^{g}_{it}(u)$ denotes the smoothed mortality rate at age $u$ for gender $g$ in prefecture $i=1, 2,\dots,N$. In this model, $\mu(u)$ represents the functional grand effect, $\alpha_{i}(u)$ denotes the functional row (prefecture) effect, $\delta^{g}(u)$ signifies the functional column (gender) effect, $\gamma_{ig}(u)$ denotes the interaction term, and $\X_{it}^{g}(u)$ represents the residual component. By grouping the interaction term and the residual component, we denote $\X_{it}^{g,\dagger}(u) = \gamma_{ig}(u) + \X_{it}^g(u)$.

The functional grand, row, and column effects can be estimated using their sample means \citep[see also Chapter 13,][]{RS06}. They are given by
\begin{align*}
\widehat{\mu}(u) &= \frac{1}{N\times 2\times T}\sum^{N}_{i=1}\sum^2_{g=1}\sum^T_{t=1}\ln \Y_{it}^{g}(u), \\
\widehat{\alpha}_i(u) &= \frac{1}{2\times T}\sum^2_{g=1}\sum^T_{t=1}\ln \Y_{it}^g(u) - \widehat{\mu}(u), \\
\widehat{\delta}^g(u) &= \frac{1}{N\times T}\sum^{N}_{i=1}\sum^T_{t=1}\ln \Y_{it}^g(u) - \widehat{\mu}(u).
\end{align*}
To ensure identifiability, several constraints are in place:
\begin{align}
\sum^{N}_{i=1}\alpha_i(u) &= \sum^2_{g=1}\delta^g(u) = 0, \\
\sum^{N}_{i=1}\X_{it}^{g,\dagger}(u) &= \sum^2_{g=1}\X_{it}^{g,\dagger}(u) = 0,\quad \forall t.
\label{eq:constraints}
\end{align}

The estimates $\widehat{\mu}(u)$, $\widehat{\alpha}_{i}(u)$, and $\widehat{\delta}^g(u)$ capture the overall mean, the regional mean, and the gender mean, $\X_{it}^{g,\dagger}(u)$ is time-varying and itself an HDFTS. Since females and males have different mortality profiles, we sequentially adopt a one-way FANOVA in Section~\ref{sec:3.2} or a functional factor model of \cite{LLS+26} in Section~\ref{sec:3.3} to capture any remaining regional pattern.

\subsection{One-way functional analysis of variance (OWA)}\label{sec:3.2}

The interaction term from the two-way FANOVA decomposition states that the prefecture effect influences each gender differently. To estimate the interaction term, we employ a one-way FANOVA decomposition \citep[see, e.g.,][]{Shang25}. 

Within our HDFTS framework, the observed mortality curves $\ln{\Y^g_{it}(u)}$ are grouped by two factors at the same time: a regional factor (prefectures) and a demographic factor (gender). Because every curve belongs to a specific prefecture-gender combination, an interaction effect may exist, in which the difference between male and female mortality shapes varies across different geographical regions. In Section~\ref{sec:3.1}, we grouped this fixed interaction structure with the temporal errors to form the mixed residual series 
\begin{equation*}
\X^{g,\dagger}_{it}(u) = \gamma_{ig}(u) + \X_{it}^{g}(u). 
\end{equation*}
To systematically separate this fixed prefecture-gender difference from the year-to-year changes, we apply a sequential one-way FANOVA directly onto the series $\X^{g,\dagger}_{it}(u)$.

In a standard functional one-way ANOVA layout \citep[see, e.g., Equation 13.1 in][]{RS06}, a model typically includes a functional grand mean, say $\theta^g(u)$, to center the data. However, because our data has already been completely centered across both factors by the first-stage TWA, this grand mean is mathematically forced to be zero. We formalize this cross-stage identifiability property in the following proposition.

\begin{prop}[Sequential identifiability of the TWA--OWA decomposition]
\label{prop:identifiability}
Let $\X_{it}^{g,\dagger}(u)$ denote the residual functional process obtained after the first-stage two-way FANOVA decomposition in Equation~\eqref{eq:1}, satisfying the identifiability constraints
\[
\sum_{i=1}^{N}\X_{it}^{g,\dagger}(u)=0,
\qquad t=1,\ldots,T.
\]
Consider applying a second-stage one-way FANOVA decomposition separately for each gender,
\begin{equation*}
\X_{it}^{g,\dagger}(u)
=
\theta^g(u)
+
\eta_i^g(u)
+
\mathcal{R}_{it}^g(u),
\end{equation*}
where the components are estimated by the corresponding sample means. Then the sequential TWA--OWA decomposition preserves identifiability in the following sense:

\begin{enumerate}
\item[(i)]
The second-stage functional grand mean is identically zero,
\[
\theta^g(u)\equiv 0,
\qquad g\in\{\mathrm{F},\mathrm{M}\}.
\]
Hence, the second-stage OWA does not introduce an additional location component.

\item[(ii)]
The extracted prefecture-specific interaction profiles satisfy the zero-sum constraint,
\[
\sum_{i=1}^{N}\eta_i^g(u)=0,
\qquad g\in\{\mathrm{F},\mathrm{M}\}.
\]

\item[(iii)]
The remaining stochastic functional process retains the cross-sectional centering property,
\[
\sum_{i=1}^{N}\mathcal{R}_{it}^g(u)=0,
\qquad t=1,\ldots,T,
\quad
g\in\{\mathrm{F},\mathrm{M}\}.
\]
\end{enumerate}
\end{prop}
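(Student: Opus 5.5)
The plan is to write the one-way FANOVA estimators explicitly as sample means and then use the first-stage zero-sum constraint $\sum_{i=1}^N \X_{it}^{g,\dagger}(u)=0$ for every $t$. Following the standard layout of Equation~13.1 in \cite{RS06}, applied separately for each gender $g$, the estimators are
\begin{align*}
\theta^g(u) &= \frac{1}{N T}\sum_{i=1}^N\sum_{t=1}^T \X_{it}^{g,\dagger}(u), \\
\eta_i^g(u) &= \frac{1}{T}\sum_{t=1}^T \X_{it}^{g,\dagger}(u) - \theta^g(u), \\
\mathcal{R}_{it}^g(u) &= \X_{it}^{g,\dagger}(u) - \theta^g(u) - \eta_i^g(u).
\end{align*}
The residual is defined as the remainder, so the decomposition holds exactly by construction. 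Each of (i)--(iii) then reduces to interchanging finite sums.

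For (i), I will swap the order of summation in $\theta^g(u)$ to get $\theta^g(u)=\frac{1}{NT}\sum_{t=1}^T\bigl(\sum_{i=1}^N \X_{it}^{g,\dagger}(u)\bigr)$. Each inner sum vanishes by the hypothesis, so $\theta^g\equiv 0$ for both genders.

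For (ii), I will use (i) to write $\eta_i^g(u)=\frac{1}{T}\sum_t \X_{it}^{g,\dagger}(u)$. Summing over $i$ and again swapping the finite sums gives $\sum_i\eta_i^g(u)=\frac{1}{T}\sum_t\sum_i \X_{it}^{g,\dagger}(u)=0$. This is in fact the same computation as in (i), since $\sum_i \eta_i^g = N\theta^g$.

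For (iii), I will sum $\mathcal{R}_{it}^g(u)$ over $i$ for fixed $t$. This gives $\sum_i \X_{it}^{g,\dagger}(u) - N\theta^g(u) - \sum_i\eta_i^g(u)$. The three terms vanish by the hypothesis, by (i) and by (ii), respectively.

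I do not expect a genuine obstacle, since everything is linear algebra on finite sums of functions, interpreted pointwise in $u$ or as identities in $L_2$. The only point needing care is the hypothesis itself. The residual $\X_{it}^{g,\dagger}$ is defined as $\ln\Y_{it}^g-\widehat\mu-\widehat\alpha_i-\widehat\delta^g$, and its per-$t$ cross-sectional zero sum is imposed as a constraint rather than implied by the sample-mean estimators. In fact, $\sum_i \X_{it}^{g,\dagger}(u) = \sum_i \ln\Y_{it}^g(u) - \frac{1}{T}\sum_s\sum_i\ln\Y_{is}^g(u)$, which vanishes only on average over $t$. I will therefore take the constraint as given, exactly as the statement does. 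I will also remark that if only the time-averaged version $\sum_t\sum_i \X_{it}^{g,\dagger}=0$ held, then (i) and (ii) would still follow. In that case (iii) would hold only after averaging over $t$, so the per-$t$ constraint is precisely what (iii) requires.
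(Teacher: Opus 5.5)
Your proposal is correct and follows the paper's own proof: swap the finite sums and apply the per-$t$ constraint for (i), use $\theta^g\equiv 0$ and the same interchange for (ii), then combine the constraint with (ii) for (iii). Your side remark is also correct: with the sample-mean estimators, the per-$t$ cross-sectional zero-sum holds only on average over $t$, so it has to be an imposed hypothesis. The paper treats it the same way, by appealing to the constraint in Equation~\eqref{eq:constraints}.
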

\begin{proof}
Proof of this result is in Appendix~\ref{proof:prop1}.
\end{proof}

By using Proposition~\ref{prop:identifiability} to drop the redundant grand mean, we avoid over-parameterization and guarantee absolute model identifiability. The second-stage OWA framework decomposes the mortality patterns that exist for specific genders in specific regions (like Tokyo-Males vs. Tokyo-Females vs. Okinawa-Females) after we remove the general national averages as follows:
\begin{equation}
\X^{g,\dagger}_{it}(u) = \eta^g_{i}(u) + \mathcal{R}^{g}_{it}(u) \quad \text{for}\quad g\in\{\text{F}, \text{M}\},\label{eq:2}
\end{equation}
where $\eta_i^g(u)$ represents the row effect corresponding to the subnational gender-prefecture interaction mean, and $\mathcal{R}^g_{it}(u)$ denotes the remaining stochastic residual component. 

The reason for extracting the fixed interaction profile $\eta_i^g(u)$ via Equation~\eqref{eq:2} before running the functional factor model is straightforward. By filtering out these stable, baseline regional differences in the gender gap, we prevent constant cross-sectional shifts from mixing into the downstream functional factor model described in Section~\ref{sec:3.3}. Without this sequential OWA step, these fixed regional variations would contaminate the factor loadings and factors, causing systematic bias in our long-term mortality forecasts.

Because the nested errors $\mathcal{R}^g_{it}(u)$ strictly sum to zero across all prefectures for every single year~$t$ (as proven in Proposition~\ref{prop:identifiability}), no information is confounded or cross-contaminated between stages. Empirically, the interaction curve $\eta^g_{i}(u)$ in \eqref{eq:2} can be approximated by
\begin{align*}
\widehat{\eta}^{g}_{i}(u) &= \frac{1}{T}\sum^T_{t=1}\X_{it}^{g,\dagger}(u).
\end{align*}

\subsection{A functional factor model (FFM)}\label{sec:3.3}

It is recognized that HDFTS are influenced by common functions over the temporal dimension, leading to strong cross-sectional dependence. For example, age- and sex-specific mortality curves collected across different regions may be influenced by common patterns in each region. Hence, the functional factor model plays an important role in modeling HDFTS. In the statistical literature, \cite{HNT23} and \cite{TNH23} present a factor model with functional factor loadings and scalar-valued factors. \cite{GQW+26} propose a factor model with scalar-valued factor loadings and functional factors. By unifying both factor models, \cite{LLS+26} introduce a factor model with functional factor loadings and functional factors, which we adopt.

For each gender $g$, we consider the following functional factor model:
\begin{equation}
\mathcal{R}^g_{it}(u) = \sum_{k=1}^{k_*}\int_{C_{k}^{*}} B^g_{ik}(u,v)F^g_{tk}(v)dv+\varepsilon^g_{it}(u), \quad u\in C_i, \label{eq:3}
\end{equation}
where $F^g_{tk}(v)$ represents time-varying factors, $B^g_{ik}(u,v)$ is a  continuous two-dimensional linear operator representing the $k$\textsuperscript{th} factor loading for the $i$\textsuperscript{th} prefecture, gender $g$ at two ages $u$ and $v$, $k_{*}$ is the number of factors, $C_{k}^{*}$ represents a compact set which may be different from $C_i$, and $\varepsilon^g_{it}(u)$ is the error term of the model. By imposing a low-dimensional functional factor condition on the latent factor $F^g_{tk}(v)$, we obtain the series approximation
\begin{equation}
F^g_{tk}(v) = \Phi^g_k(v)^{\top}\bm{G}^g_t + \eta^g_{tk}(v), \quad v\in C_{k}^{*}, \quad k=1, 2,\dots,k_{*}, \label{eq:4}
\end{equation}
where $\Phi^{g}_{k}(v)$ is a $q$-dimensional vector of basis functions, $\bm{G}^g_t$ is a $q$-dimensional vector of 
random variables, and $\eta^g_{tk}(v)$ is the approximation error. Note that the type of basis functions is not important, but $F^g_{tk}(v)$ has a low-rank representation that absorbs into the representation of $\mathcal{R}^g_{it}(u)$.

By plugging~\eqref{eq:4} into~\eqref{eq:3}, we obtain
\begin{align*}
\mathcal{R}^g_{it}(u) &= \sum_{k=1}^{k_*}\int_{\mathcal{C}_k^{*}}B^g_{ik}(u,v)\left[\Phi^g_k(v)^{\top}\bm{G}^g_t + \eta^g_{tk}(v)\right]dv+\varepsilon^g_{it}(u) \\
&=\sum^{k_*}_{k=1}\int_{\mathcal{C}_k^{*}} B^g_{ik}(u,v)\Phi^g_k(v)^{\top}dv\bm{G}^g_t+\sum^{k_*}_{k=1}\int_{\mathcal{C}_k^{*}} B^g_{ik}(u,v)\eta^g_{tk}(v)dv + \varepsilon^{g}_{it}(u) \\
&=\Lambda^g_i(u)^{\top}\bm{G}^g_t + \varepsilon_{it}^{*g}(u),
\end{align*}
where $\Lambda^g_i(u) = \sum^{k_*}_{k=1}\int_{\mathcal{C}_k^{*}} B^g_{ik}(u,v)\Phi_k^g(v)^{\top}dv$ and $\varepsilon^{*g}_{it}(u) = \sum_{k=1}^{k_*}\int_{\mathcal{C}_k^{*}} B^g_{ik}(u,v)\eta^g_{tk}(v)dv + \varepsilon^g_{it}(u)$. Note that the notation $k_{*}$ is aggregated out in $\Lambda^g_i(u)$ and $\varepsilon^{*g}_{it}(u)$. Hereafter, the number of factors is determined by the dimension of $\bm{G}_{t}^{g}$, that is $q$.

To estimate $\widetilde{\bm{G}}^g_t$, we first estimate the covariance of $\mathcal{R}^g_{it}(u)$ by
\begin{equation*}
\bm{\Delta}^g = (\Delta^g_{tt^{'}})_{T\times T}\quad \text{with}\quad \Delta^g_{tt^{'}} = \frac{1}{N}\sum^{N}_{i=1}\int_{u\in \mathcal{C}_i} \mathcal{R}^g_{it}(u)\mathcal{R}^g_{it^{'}}(u)du.
\end{equation*}
By eigenanalysis of the $T\times T$ matrix $\bm{\Delta}^g$, we obtain $\widetilde{\bm{G}}^g=(\widetilde{\bm{G}}^g_{1},\dots,\widetilde{\bm{G}}^g_T)^{\top}$ as a $T\times q$ matrix with columns being the eigenvectors (multiplied by $\sqrt{T}$) corresponding to the $q$ largest eigenvalues of~$\bm{\Delta}^g$. The factor loading functions are estimated as
\begin{equation*}
\widetilde{\Lambda}^g_i(u) = \frac{1}{T}\sum^T_{t=1}\mathcal{R}^g_{it}(u)\widetilde{\bm{G}}^g_t,\qquad i=1,\dots,N,
\end{equation*}
via least squares, using the normalization restriction $\frac{1}{T}\sum^T_{t=1}\widetilde{\bm{G}}^g_t\widetilde{\bm{G}}_t^{g\top} = \bm{I}^g_q$. Note that, as is standard in factor modeling frameworks, the factors $\widetilde{\bm{G}}^g_t$ and factor loading functions $\widetilde{\Lambda}^g_i(u)$ are uniquely identified up to a sign change. However, because these components enter the system as an inner product, any arbitrary sign flip cancels out identically, leaving the reconstructed space $\widetilde{\Lambda}^g_i(u)^{\top}\widetilde{\bm{G}}^g_t$ and its downstream forecasts completely invariant.

To estimate the number of factors $q$, we resort to an information criterion introduced in \cite{LLS+26}, which can slowly diverge to infinity. Let $\nu_{\ell}(\bm{\Delta}^g/T)$ be the $\ell$\textsuperscript{th} largest eigenvalue of $\bm{\Delta}^g/T$ and define
\begin{equation*}
\widehat{q} = \argmin_{1\leq \ell\leq q_{\max}}\left[\nu_{\ell}(\bm{\Delta}^g/T)+\ell\phi_{NT}\right]-1,
\end{equation*}
where $\phi_{NT}$ is the penalty term and $q_{\max}$ is a user-specified positive integer. In practice, we set $\phi_{NT} = [\max(T,N)]^{-\frac{1}{2}}$ and $q_{\max} = T$.

Conditional on the estimated factor loadings $\widetilde{\Lambda}^g_i(u)$, the time-varying dynamics are captured by the $q$-dimensional factors $\widetilde{\bm{G}}^g_t$. For each factor, we apply a univariate ETS forecasting method to obtain the $h$-step-ahead forecast of $\widetilde{\bm{G}}^g_{T+h}$. The optimal ETS model is determined by an automatic algorithm based on the corrected Akaike information criterion in the forecast package. The optimal model does not need to be the same across prefectures and genders. For completeness, we also explore the point forecast accuracy based on autoregressive integrated moving average (ARIMA) model in Appendix~\ref{App:A}.

By multiplying the forecasted factors with the estimated factor loadings, the $h$-step-ahead forecast of $\mathcal{R}_{it}^g(u)$ is given by
\begin{equation*}
\widehat{\mathcal{R}}_{i(T+h)}^g(u) = \widetilde{\Lambda}^g_i(u)^{\top} \widehat{\widetilde{\bm{G}}}^g_{T+h},
\end{equation*}
where $\widehat{\widetilde{\bm{G}}}^g_{T+h}$ denotes the $h$-step-ahead forecast of $\widetilde{\bm{G}}^g_{T+h}$.

\section{Construction of prediction intervals}\label{sec:4} 

While we present a flexible approach to model and forecast HDFTS in Section~\ref{sec:3}, it is also important to quantify the forecast uncertainty, which is often measured using a statistical model. This approach may be vulnerable to model misspecification, selection bias, and limited finite-sample validity. While bootstrapping can potentially mitigate some of these concerns \citep[see, e.g.,][]{PS23}, it is often computationally demanding. Here, we take a model agnostic and distribution-free approach, namely conformal prediction, to construct prediction intervals in HDFTS. Among a rich family of conformal prediction methods, we consider split and sequential conformal predictions.

Specifically, for the TWA + OWA + FFM framework in \eqref{eq:5}, we first jointly model all prefectures and genders using two-way FANOVA to extract the deterministic grand and main effects. We then address the interaction term by applying a gender-specific one-way FANOVA, which decomposes the remaining variation into deterministic prefecture-specific means and a stochastic residual process $\mathcal{R}_{it}^g(u)$. By combining~\eqref{eq:1} with~\eqref{eq:2}, we obtain $h$-step-ahead point forecast of the HDFTS on the $\log$ scale as:
\begin{equation}
\ln{\widehat{\Y}}_{i, T+h}^g(u) = \underbrace{\widehat{\mu}(u) + \widehat{\alpha}_i(u) + \widehat{\delta}^g(u)}_{\text{Deterministic (TWA)}} + \underbrace{ \widehat{\eta}_i^g(u)}_{\text{Deterministic (OWA)}} + \underbrace{\widehat{\mathcal{R}}_{i, T+h}^g(u)}_{\text{Stochastic (FFM)}}.\label{eq:5}
\end{equation}
For the TWA + FFM framework, we assume that there is no deterministic interaction component ($ \eta_i^g(u) = 0$). Consequently, the construction of the prediction interval utilizes the residuals $\mathcal{X}_{it}^{g,\dagger}(u)$ obtained directly after the two-way FANOVA. 

In both modeling frameworks, we utilize split and sequential conformal prediction methods of \cite{Shang26}. In the split conformal prediction, we use a validation set to calibrate a tuning parameter $\xi_{\alpha, i}^g$ to scale pointwise summary measures (such as the standard deviation or absolute quantiles) of the residuals as described below. In the sequential conformal prediction, we avoid a fixed validation set by using an autoregressive model on the absolute residuals to update predictive quantiles as new data arrive. Finally, these calibrated stochastic components are added back to the deterministic structure to produce the complete interval forecasts.

\subsection{Split conformal prediction}\label{sec:4.1}

The split conformal prediction method uses a validation set to calibrate the empirical coverage probability to match the nominal coverage probability closely \citep[see also][]{ANH15}. We divide the 49-year sample (1975-2023) into training, validation, and test sets with proportions of 60\%, 20\%, and 20\%, respectively. This choice of proportions is arbitrary, but in line with the common practice in forecasting. Using the initial training data from 1975 to 2002, we implement an expanding-window forecasting scheme to generate $h$-step-ahead forecasts for the validation period from 2003 to 2013, for $h=1,2,\dots,10$. Under this forecasting scheme, the training sample is progressively enlarged at each iteration. The number of curves in the validation set or test set varies with the forecast horizon $h$. For instance, when $h=1$, there are 11 years to compute residual functions, corresponding to the differences between the observed curves in the validation set and their forecasts, whereas when $h=10$, there are only two years. From the absolute values of these residual functions, we compute pointwise summary measures $\vartheta_i^g(u)$, such as the pointwise standard deviation (abbreviated as Split (sd)) or the pointwise quantiles (abbreviated as Split (quantile)), for each region $i$ and gender~$g$. Alternatively, for quantile-based intervals, we take the absolute residuals and calculate the $100(1-\alpha)\%$ empirical quantiles, where $\alpha$ is the significance level, typically $\alpha=0.05$.

For a given forecast horizon $h$, let us denote the residual functions on the \textit{original} scale $\widehat{\epsilon}_{im}^g(u) = \Z_{im}^g(u) - \widehat{\Z}_{im}^g(u)$ for $m=1,\dots,M$, where $M$ denotes the number of years in the validation set. With the objective of minimizing the absolute difference between the empirical and nominal coverage probabilities, we seek to determine a tuning parameter $\xi_{\alpha, i}^g$ such that $100(1-\alpha)\%$ of the residual functions satisfy
\begin{equation*}
-\xi_{\alpha,i}^g\vartheta_i^g(u)\leq \widehat{\bm{\epsilon}}_{i}^g(u)\leq \xi_{\alpha,i}^g\vartheta_i^g(u),
\end{equation*}
where $\widehat{\bm{\epsilon}}_i^g(u)=[\widehat{\epsilon}_{i,1}^g(u),\dots,\widehat{\epsilon}_{i,M}^g(u)]$. Computationally, the \Verb|optim| function can be used. By the law of large numbers, when $M$ is reasonably large, one could achieve
\begin{equation*}
\mathbb{P}\left[-\xi_{\alpha,i}^g\vartheta_i^g(u)\leq \Z_{iT+h}^g(u) - \widehat{\Z}_{iT+h}^g(u)\leq \xi_{\alpha,i}^g\vartheta_i^g(u)\right]\approx \frac{1}{M}\sum^M_{m=1}\mathds{1}\left\{-\xi_{\alpha,i}^g\vartheta_i^g(u)\leq \widehat{\epsilon}_{im}^g(u)\leq \xi_{\alpha,i}^g\vartheta_i^g(u)\right\},
\end{equation*}
where $\Z_{iT+h}^g(u)$ denotes the holdout age-specific mortality rates in prefecture $i$ and gender $g$ in year $(T+h)$, $\widehat{\Z}_{iT+h}^g(u)$ denotes the corresponding forecasts in the original scale (after taking exponential back-transformation), and $\mathds{1}\{\cdot\}$ represents the binary indicator function. As a general guideline, $M$ should typically be on the order of tens.

\subsection{Sequential conformal prediction}\label{sec:4.2}

The split conformal prediction method requires a validation data set to calibrate a tuning parameter. Not only can it be time-consuming, but it also reduces the samples used to predict the test set. Without the need for a validation set, this sequential conformal prediction can automatically tune the predictive quantiles of the absolute residual functions as new data arrive. For example, with the last 10 years as the test set, we use the other years to compute the absolute residuals $\{|e_{i3}^g(u_j)|,\dots,|e_{i\iota}^g(u_j)|\}$ for a given age $u_j$. We require at least the first two curves to produce a forecast, so the residuals begin with the $3$\textsuperscript{rd} curve; that is, the first two out of 49 years data is negligible.

At the quantile of $1-\alpha$, we fit a quantile regression on lagged residuals, where the order of autoregression, denoted by $\mathrm{p}$, is determined by an information criterion, such as the Akaike information criterion \citep{Akaike1969}. Conditional on the most recent $\mathrm{p}$ number of absolute residuals as input, we predict a one-step-ahead quantile, denoted $\widehat{q}^g_{i\iota+1,\alpha}(u_j)$, where $\iota$ represents the data in the end of the training period. The prediction intervals are then given by 
\begin{equation*}
\widehat{\Z}_{i(\iota+1)}^g(u_j) \pm \widehat{q}^g_{i(\iota+1),\alpha}(u_j).
\end{equation*}
Once the actual curve $\Z_{i(\iota+1)}^g(u)$ arrives, we can update the absolute residual $|e_{i(\iota+1)}^g(u_j)|$ and refit.

\section{Model fitting}\label{sec:5}

\subsection{Model fitting via two-way functional analysis of variance}

Using subnational Japanese age- and sex-specific $\log$ mortality rates, we apply a two-way FANOVA to decompose HDFTS into different mean effects. In Figure~\ref{fig:2}, we display the grand mean effect, the row effect (prefecture), the column effect (sex), and the time-varying residuals that can contain the interaction term between prefecture and sex. The functional grand effect reveals an overall mean trend. The functional row effect reveals cross-regional heterogeneity, particularly for younger ages. The functional column effect shows a contrast between the male and female data.
\begin{figure}[!htb]
\centering
\subfloat[Functional grand effect]
{\includegraphics[width=5.71cm]{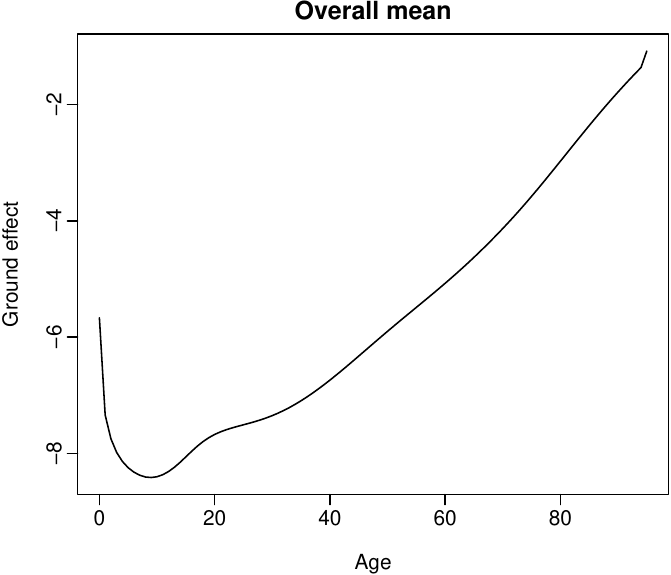}}
\quad
\subfloat[Functional row effect]
{\includegraphics[width=5.71cm]{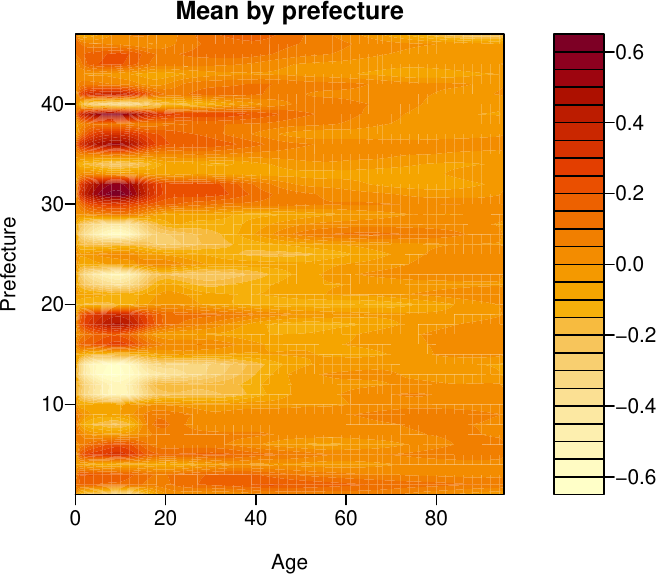}}
\quad
\subfloat[Functional column effect]
{\includegraphics[width=5.71cm]{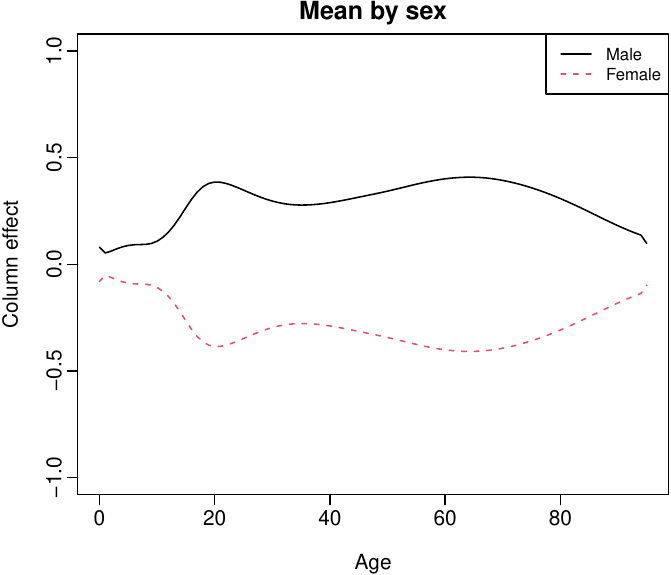}}
\caption{\small Two-way FANOVA decomposition: Functional grand effect, capturing the overall age profile; functional row effect, capturing the means across prefectures; and functional column effect, capturing the means across sexes.}\label{fig:2}
\end{figure}

In Japan, regional mortality heterogeneity is often larger at younger ages because deaths are more driven by external and behavioral causes (e.g., suicide, accidents), which vary strongly across prefectures. These causes are sensitive to local socioeconomic conditions, such as employment stability, rural depopulation, and social isolation. Epidemiologically, subnational variations in premature and youth mortality are heavily driven by localized disparities in major causes of death and socio-environmental stressors, which create highly divergent, non-senescent survival trajectories across different prefectures \citep{Tsuboi22}. Furthermore, the baseline shifts in these regional effects faithfully replicate Japan's documented subnational health variations and widening prefectural mortality gaps \citep{Nomura17}. Northeastern prefectures (e.g., Aomori) persistently exhibit elevated functional row effects, a pattern traditionally associated with severe winter climates, higher baseline cardiovascular disease risk profiles, and distinct dietary risk factor configurations (such as high sodium intake) typical of northern subnational divisions \citep{Nomura17}. Conversely, southwestern regions (e.g., Okinawa) reflect pronounced negative row deviations, matching their established profiles of exceptional historical longevity and traditional dietary configurations \citep{Willcox07}.

\subsection{Model fitting via one-way functional analysis of variance}\label{Results:OWA}

Like many developed nations, females generally have lower mortality rates than males at all ages in Japan \citep{Ikeda11}. Since females and males are biologically different \citep{Rogers2010}, we treat each of them separately. Thus, we apply one-way FANOVA to decompose the residual functions into a grand effect, a row effect, and the remainder term. 

In Figure~\ref{fig:3}, we present the decomposition terms obtained from the one-way FANOVA. For either females or males, the functional row effects reveal cross-region heterogeneity, more so for the male data.
\begin{figure}[!htb]
\centering
\subfloat[Functional row effect of the female data]
{\includegraphics[width=8.5cm]{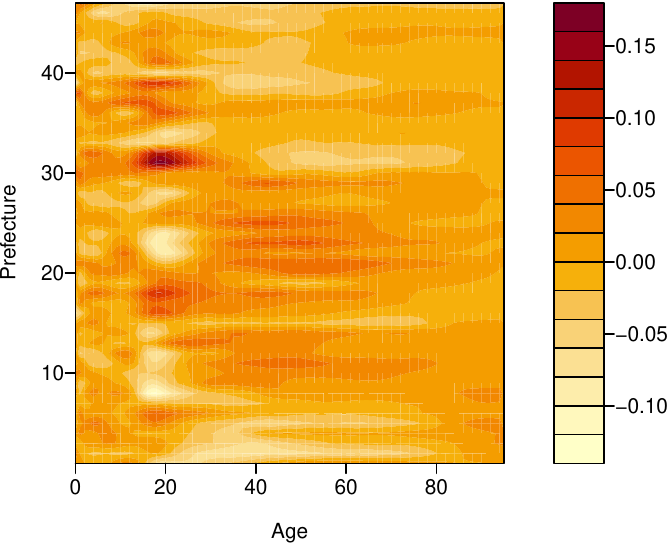}}
\qquad   
\subfloat[Functional row effect of the male data]
{\includegraphics[width=8.5cm]{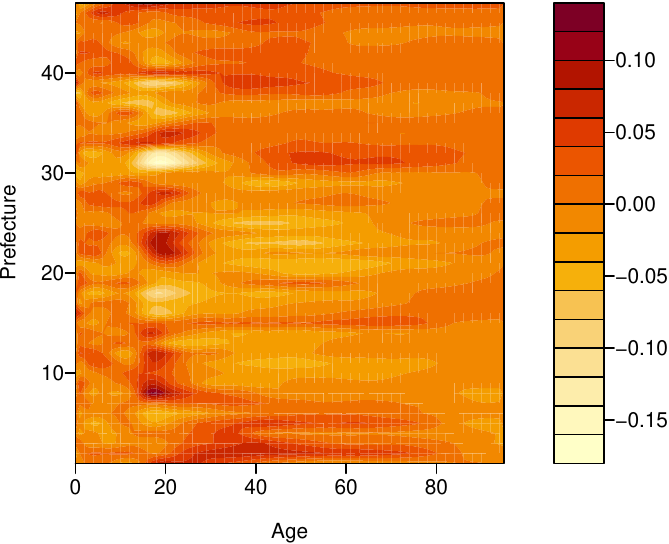}}
\caption{\small From the interaction term and time-varying residuals, we implement the one-way FANOVA for each gender to capture the row effect, respectively.}\label{fig:3}
\end{figure}

\subsection{Model fitting via functional factor model}

From the reminder term of the one-way FANOVA, we apply the functional factor model to extract the first set of factor loadings and its associated factors for the female and male data in Figure~\ref{fig:4}. While the two-way and one-way FANOVA models extract deterministic mean effects, the functional factor model aims to model the information associated with variance. 
\begin{figure}[!htb]
\centering
\subfloat[First functional factor loading of the female data]
{\includegraphics[width=8.5cm]{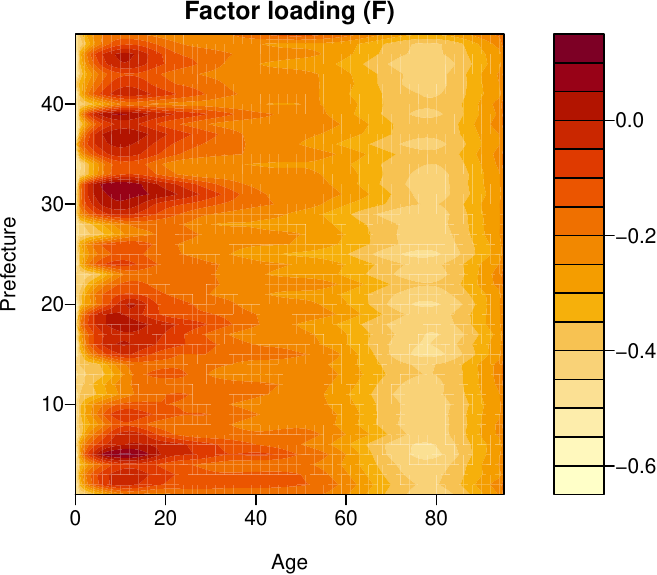}}
\qquad
\subfloat[First functional factor loading of the male data]
{\includegraphics[width=8.5cm]{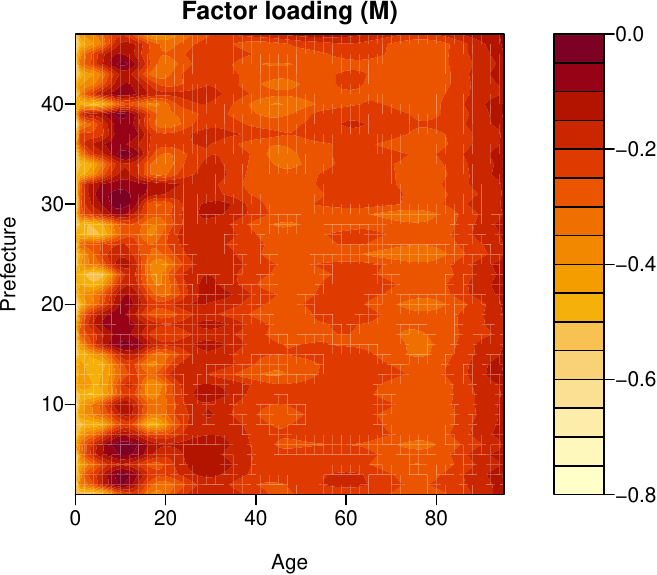}}
\\
\subfloat[First set of the factors of the female data]
{\includegraphics[width=8.5cm]{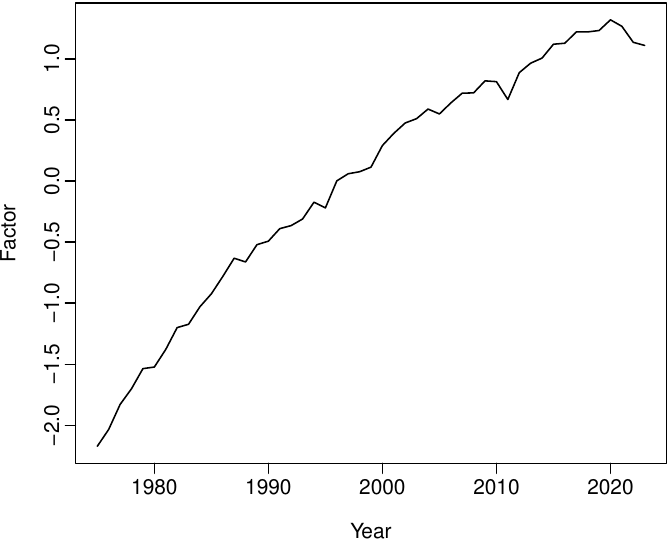}}
\qquad
\subfloat[First set of the factors of the male data]
{\includegraphics[width=8.5cm]{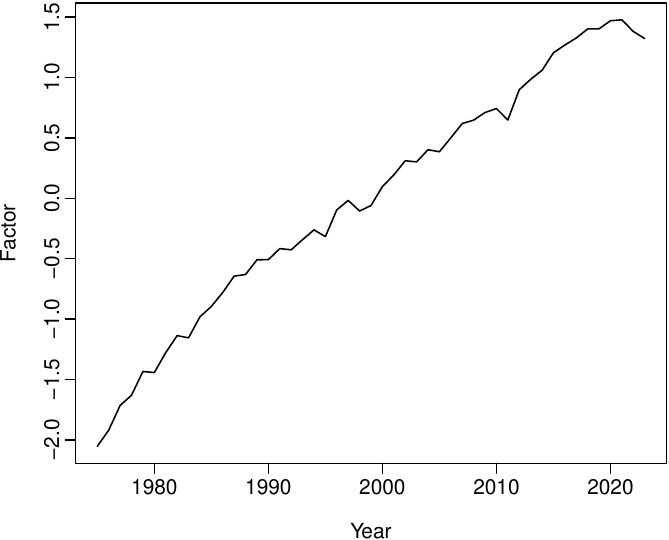}}
\caption{\small Via the functional factor model, we display the first set of factor loadings and factors, based on the estimated variance of the HDFTS residuals.}\label{fig:4}
\end{figure}

For the female and male series, there is an increasing time trend. Based on the estimated factor loadings, the greatest regional heterogeneity is observed in younger ages. Downstream, the functional factor model captures residual stochastic dynamics. Since the information criterion dominantly selects $\widehat{q}=1$ (as empirically documented across all expanding windows in Appendix~\ref{app:model_specs}), the primary functional factor represents the overarching nationwide secular decline in mortality achieved via universal improvements in public health and healthcare access \citep{Ikeda11}. The factor loadings thus serve as a proxy for regional convergence velocity, mapping which subnational economies adapt rapidly to national baseline shifts versus those experiencing distinct localized aging pressures.

\section{Results}\label{sec:6}

\subsection{Expanding-window forecast scheme}\label{Results:PFE}

An expanding window analysis is a robust method for evaluating the stability of model parameters and the accuracy of temporal predictions. This approach assesses structural consistency by iteratively computing parameter estimates and corresponding forecasts as the sample size grows \citep[see][pp. 313--314]{ZW06}. 

In this study, we utilize a training set consisting of the first 39 years (1975--2013) of Japanese subnational age- and sex-specific mortality data to generate one- to 10-step-ahead forecasts. We then re-estimate the time-series model parameters by incrementing the sample size by one year at each iteration until the full data period ending in 2023 is exhausted. This recursive procedure yields a total of 10 one-step-ahead forecasts, 9 two-step-ahead forecasts, and continues down to a single 10-step-ahead forecast. These values are compared with holdout observations to quantify out-of-sample accuracy. Figure~\ref{fig:expanding} illustrates this expanding window scheme for the horizon $h=1$, though horizons up to $h=10$ are evaluated.
\begin{figure}[!htb]
\begin{center}
\begin{tikzpicture}
\draw[->] (0,0) -- (10,0) node[right] {Time};
\draw[fill=blue!20] (0,-0.5) rectangle (3,0.5) node[midway] {Train};
\draw[fill=red!20] (3,-0.5) rectangle (3.5,0.5) node[midway] {F};
\draw[fill=blue!20] (0,-1.5) rectangle (5,-0.5) node[midway] {Train};
\draw[fill=red!20] (5,-1.5) rectangle (5.5,-0.5) node[midway] {F};
\draw[fill=blue!20] (0,-2.5) rectangle (7,-1.5) node[midway] {Train};
\draw[fill=red!20] (7,-2.5) rectangle (7.5,-1.5) node[midway] {F};
\draw[fill=blue!20] (0,-3.5) rectangle (9,-2.5) node[midway] {Train};
\draw[fill=red!20] (9,-3.5) rectangle (9.5,-2.5) node[midway] {F};
\node[left] at (0,0) {1975:2013};
\node[left] at (0,-1) {1975:2014};
\node[left] at (0,-2) {\hspace{-0.8in}{$\vdots$}};
\node[left] at (0,-3) {1975:2022};
\draw[fill=blue!20] (6.5,1) rectangle (7,1.5);
\node[right] at (7,1.25) {Training Window};
\draw[fill=red!20] (6.5,0.5) rectangle (7,1);
\node[right] at (7,0.75) {Forecast (F)};
\end{tikzpicture}
\end{center}
\caption{\small A diagram of the expanding-window forecast scheme. The data begin in 1975 and end in 2023.}\label{fig:expanding}
\end{figure}

\subsection{Point forecast error metrics}\label{sec:PFE}

To measure the point forecast errors, we use the root mean square forecast error (RMSFE) and the mean absolute forecast error (MAFE). For each region $i$ and gender $g$, the RMSFE and MAFE are 
\begin{align*}
\text{RMSFE}^g_i(h) &= \sqrt{\frac{1}{(11-h)\times 96}\sum^{11-h}_{\xi=h}\sum^{96}_{j=1}\left[\Z^g_{\eta+\xi,i}(u_j) - \widehat{\Z}^g_{\eta+\xi,i}(u_j)\right]^2},\\
\text{MAFE}^g_i(h) &= \frac{1}{(11-h)\times 96} \sum^{11-h}_{\xi=h}\sum^{96}_{j=1} \left|\Z^g_{\eta+\xi,i}(u_j) - \widehat{\Z}^g_{\eta+\xi,i}(u_j)\right|,
\end{align*}
where $\Z_{\eta+\xi,i}^g(u_j)$ represents the holdout sample for age $u_j$ and gender $g$ in region $i$, and $\widehat{\Z}_{\eta+\xi,i}^g(u_j)$ represents the corresponding point forecasts.

\subsection{Comparison of point forecast accuracy}\label{sec:CPF}

In Table~\ref{tab:1}, through a simple average of 47 prefectures, we report the one-step-ahead to ten-step-ahead point forecast errors for the holdout samples of the female and male data. In an early work of \cite{JSS24}, they considered the two-way FANOVA to capture the mean effects and then modeled the residuals by multivariate functional principal component analysis of \cite{SK22}. Compared to that method, our proposed methods based on FANOVA and the functional factor model improve the point forecast accuracy. Between the two frameworks, there is a slight preference for including the one-way FANOVA to capture the interaction term between prefecture and sex. 
\begin{center}
\renewcommand*{\arraystretch}{0.88}
\tabcolsep 0.195in
\begin{small}
\begin{longtable}{@{}ll ccc ccc @{}}
\caption{\small Averaged across 47 prefectures, we evaluate and compare the point forecast accuracy measured by RMSE and MAE. Forecasting method is ETS. The method with the smallest overall error is bolded. TWA + OWA + FFM represents our method combining two-way FANOVA, one-way FANOVA and a functional factor model. TWA + FFM represents our method combining two-way FANOVA and a functional factor model. TWA + MFTS is a benchmark method combining two-way FANOVA and multivariate functional time series method in \cite{JSS24}.}\label{tab:1} \\
\toprule
& & \multicolumn{3}{c}{Female} & \multicolumn{3}{c}{Male} \\
\cmidrule(lr){3-5} \cmidrule(lr){6-8}
Metric & $h$ & \makecell{TWA+\\OWA+FFM} & \makecell{TWA+\\FFM} & \makecell{TWA+\\MFTS} & \makecell{TWA+\\OWA+FFM} & \makecell{TWA+\\FFM} & \makecell{TWA+\\MFTS} \\ 
\midrule
\endfirsthead
\toprule
& & \multicolumn{3}{c}{Female} & \multicolumn{3}{c}{Male} \\
\cmidrule(lr){3-5} \cmidrule(lr){6-8}
Metric & $h$ & \makecell{TWA+\\OWA+FFM} & \makecell{TWA+\\FFM} & \makecell{TWA+\\MFTS} & \makecell{TWA+\\OWA+FFM} & \makecell{TWA+\\FFM} & \makecell{TWA+\\MFTS} \\ 
\midrule
\endhead
\midrule
\multicolumn{8}{r}{{Continued on next page}} \\
\endfoot
\endlastfoot
RMSFE & 1  & 0.0025 & 0.0028 & 0.0031 & 0.0045 & 0.0049 & 0.0054 \\
      & 2  & 0.0028 & 0.0030 & 0.0032 & 0.0044 & 0.0050 & 0.0056 \\
      & 3  & 0.0031 & 0.0031 & 0.0034 & 0.0042 & 0.0050 & 0.0056 \\
      & 4  & 0.0030 & 0.0032 & 0.0035 & 0.0045 & 0.0052 & 0.0058 \\
      & 5  & 0.0026 & 0.0033 & 0.0037 & 0.0053 & 0.0057 & 0.0060 \\
      & 6  & 0.0027 & 0.0035 & 0.0039 & 0.0061 & 0.0063 & 0.0063 \\
      & 7  & 0.0032 & 0.0037 & 0.0042 & 0.0069 & 0.0070 & 0.0066 \\
      & 8  & 0.0040 & 0.0039 & 0.0048 & 0.0063 & 0.0065 & 0.0066 \\
      & 9  & 0.0050 & 0.0044 & 0.0057 & 0.0043 & 0.0054 & 0.0068 \\
      & 10 & 0.0050 & 0.0043 & 0.0057 & 0.0043 & 0.0048 & 0.0063 \\
\cmidrule{2-8} 
      & Mean   & \textbf{0.0034} & 0.0035 & 0.0041 & \textbf{0.0051} & 0.0056 & 0.0061 \\
      & Median & \textbf{0.0030} & 0.0034 & 0.0038 & \textbf{0.0045} & 0.0053 & 0.0061 \\
\midrule
MAFE  & 1  & 0.0010 & 0.0011 & 0.0012 & 0.0017 & 0.0019 & 0.0021 \\
      & 2  & 0.0011 & 0.0011 & 0.0012 & 0.0017 & 0.0019 & 0.0022 \\
      & 3  & 0.0012 & 0.0012 & 0.0013 & 0.0017 & 0.0020 & 0.0023 \\
      & 4  & 0.0012 & 0.0012 & 0.0013 & 0.0019 & 0.0021 & 0.0024 \\
      & 5  & 0.0011 & 0.0013 & 0.0014 & 0.0022 & 0.0023 & 0.0025 \\
      & 6  & 0.0011 & 0.0013 & 0.0014 & 0.0025 & 0.0026 & 0.0026 \\
      & 7  & 0.0012 & 0.0014 & 0.0015 & 0.0028 & 0.0029 & 0.0027 \\
      & 8  & 0.0015 & 0.0014 & 0.0017 & 0.0025 & 0.0026 & 0.0027 \\
      & 9  & 0.0019 & 0.0015 & 0.0019 & 0.0017 & 0.0021 & 0.0026 \\
      & 10 & 0.0019 & 0.0015 & 0.0019 & 0.0016 & 0.0019 & 0.0025 \\
\cmidrule{2-8} 
      & Mean   & \textbf{0.0013} & \textbf{0.0013} & 0.0015 & \textbf{0.0020} & 0.0022 & 0.0025 \\
      & Median & \textbf{0.0012} & 0.0013 & 0.0014 & \textbf{0.0018} & 0.0021 & 0.025 \\
\bottomrule
\end{longtable}
\end{small}
\end{center}

\vspace{-.4in}

For the male series, our developed methods can achieve improved accuracy at the longer horizons, where the number of evaluation points is smaller. For models that can capture the long-term dynamic, it often displays superior accuracy.
To visually inspect the error trajectories across expanding horizons, Figure~\ref{fig:ets_grid} plots the individual RMSFE and MAFE patterns. For the female series, the errors generally grow monotonically with the horizon. Conversely, for the male series, our developed methods achieve notably lower errors at the longest horizons ($h=9, 10$) after peaking around $h=7$, demonstrating their ability to adapt to long-term dynamics where the number of evaluation points is smaller.
\begin{figure}[!htb]
\centering
\begin{tabular}{cc}
\includegraphics[width=0.49\textwidth]{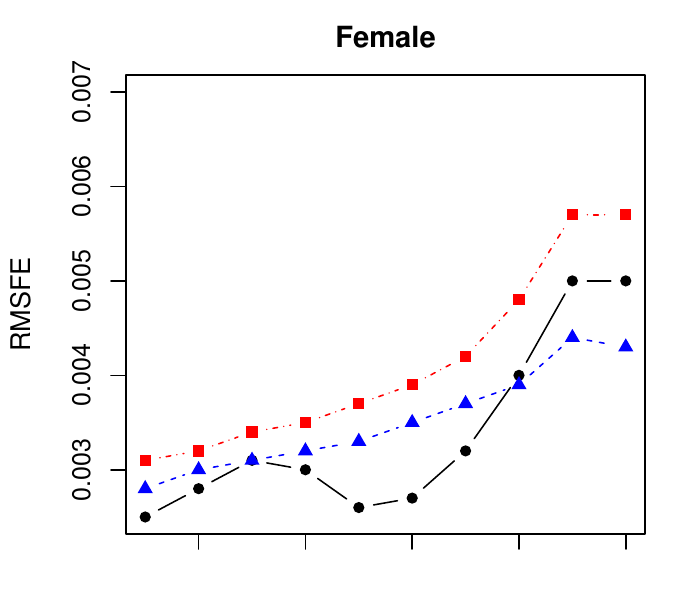} & 
\includegraphics[width=0.49\textwidth]{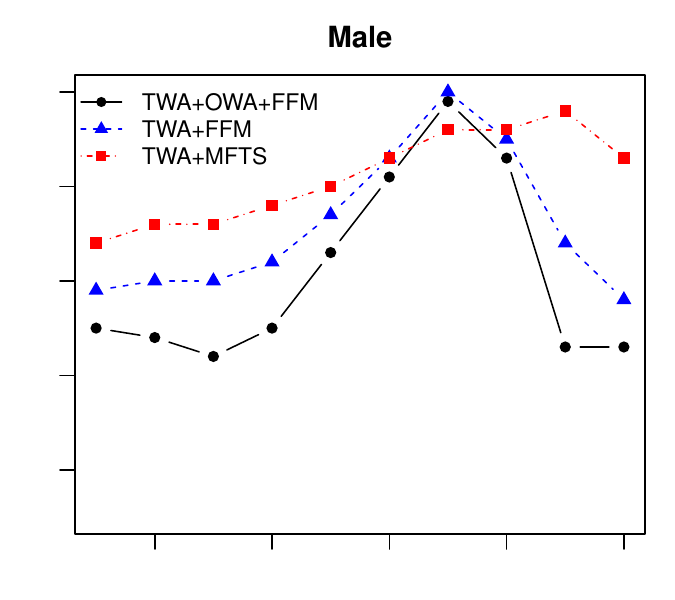} \\ \\[6pt]
\includegraphics[width=0.49\textwidth]{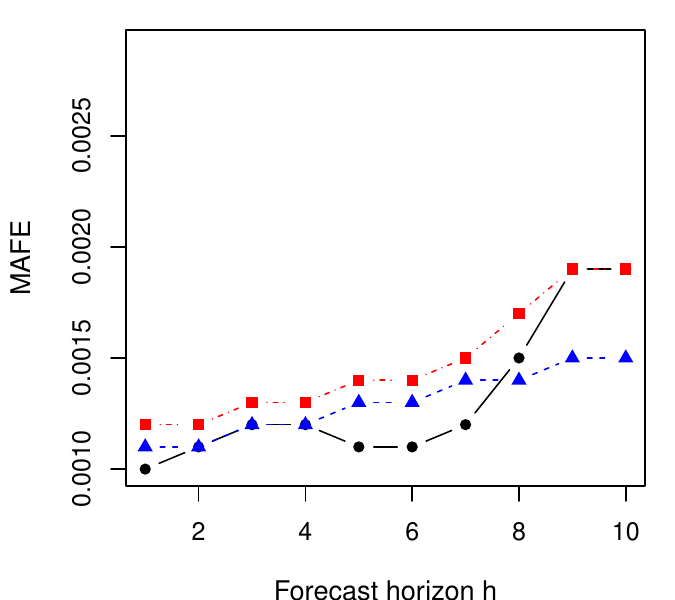} & 
\includegraphics[width=0.49\textwidth]{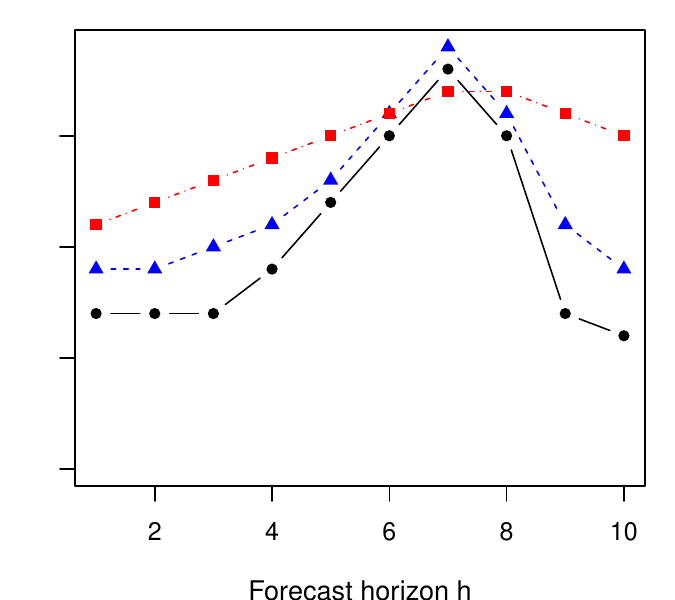} \\ \\
\end{tabular}
\caption{\small ETS point forecast error comparison across expanding horizons $h=1$ to~$10$.}\label{fig:ets_grid}
\end{figure}

In Appendix~\ref{App:A}, we present an additional comparison of point forecast errors based on the ARIMA forecasting method. 

The current results are a gender-level comparison, averaged across 47 prefectures. To examine regional heterogeneity in forecast accuracy, it is possible to average over two genders. To better facilitate this comparison, we develop a \href{https://cfjimenezv.shinyapps.io/FANOVA_FFM/}{Shiny app} in \Rlogo. To formally evaluate the statistical significance of these point forecast accuracy gains across the subnational divisions, we conduct a Model Confidence Set analysis. The detailed empirical framework and the resulting prefecture-level survival heatmaps are provided in Appendix~\ref{App:MCS}.

\subsection{Interval forecast error metrics}\label{sec:IFE}

To evaluate and compare interval forecast accuracy, we compute the empirical coverage probability (ECP), the coverage probability difference (CPD), and the interval score of \cite{GR07}. For each year in the test set, the $h$-step-ahead prediction intervals are calculated at $1-\alpha$ nominal coverage probability, the lower and upper bounds, denoted by $[\widehat{\Z}_{\eta+\xi,i}^{\text{lb},g}(u_j),\widehat{\Z}_{\eta+\xi,i}^{\text{ub},g}(u_j)]$, are not required to be centered around the point forecasts \citep{JSS24, SH26}. For region $i$ and gender $g$, the ECP and CPD are defined as
\begin{align*}
\text{ECP}_{\alpha,h,i}^g &= \frac{1}{(11-h)\times 96}\sum^{10}_{\xi=h}\sum^{96}_{j=1}\mathds{1}\left\{\widehat{\Z}_{\eta+\xi,i}^{\text{lb},g}(u_j)\leq \Z_{\eta+\xi,i}^g(u_j)\leq \widehat{\Z}_{\eta+\xi,i}^{\text{ub},g}(u_j)\right\}\\    
\text{CPD}_{\alpha,h,i}^g &=\left|\frac{1}{(11-h)\times 96}\sum^{10}_{\xi=h}\sum^{96}_{j=1}\left[\mathds{1}\big\{\Z_{\eta+\xi,i}^g(u_j)>\widehat{\Z}_{\eta+\xi,i}^{\text{ub},g}(u_j)\big\}+\mathds{1}\big\{\Z_{\eta+\xi,i}^g(u_j)<\widehat{\Z}_{\eta+\xi,i}^{\text{ub},g}(u_j)\big\}\right]-\alpha\right|,
\end{align*}
where $\eta$ represents the year before the test data set, and $\xi$ denotes the horizon-specific year index of the testing set.

The ECP assesses coverage without evaluating the sharpness of the prediction interval. By combining coverage and sharpness, we use a scoring rule for the prediction interval at age $u_j$, denoted as
\begin{align*}
S_{\alpha,\xi}\left[\widehat{\Z}_{\eta+\xi,i}^{\text{lb}, g}(u_j), \widehat{\Z}_{\eta+\xi,i}^{\text{ub}, g}(u_j), \Z_{\eta+\xi,i}^g(u_j)\right]&=\left[\widehat{\Z}_{\eta+\xi,i}^{\text{ub}, g}(u_j) -\widehat{\Z}_{\eta+\xi,i}^{\text{lb}, g}(u_j)\right] \\
&+\frac{2}{\alpha}\left[\widehat{\Z}_{\eta+\xi,i}^{\text{lb},g}(u_j) - \Z_{\eta+\xi,i}^g(u_j)\right]\mathds{1}\left\{\Z_{\eta+\xi,i}^{g}(u_j)<\widehat{\Z}_{\eta+\xi,i}^{\text{lb},g}(u_j)\right\} \\
&+\frac{2}{\alpha}\left[\Z_{\eta+\xi,i}^g(u_j) - \widehat{\Z}_{\eta+\xi,i}^{\text{ub},g}(u_j)\right]\mathds{1}\left\{\Z_{\eta+\xi,i}^g(u_j)>\widehat{\Z}_{\eta+\xi,i}^{\text{ub},g}(u_j)\right\},
\end{align*}
where $\alpha$ is a level of significance.

Averaging the number of ages and the number of years in the testing set, the mean interval score (IS) is given as
\begin{equation*}
\overline{S}^g_{\alpha,h,i}=\frac{1}{(11-h)\times 96}\sum^{10}_{\xi=h}\sum^{96}_{j=1}S_{\alpha,\xi}\left[\widehat{\Z}_{\eta+\xi,i}^{\text{lb}, g}(u_j), \widehat{\Z}_{\eta+\xi,i}^{\text{ub}, g}(u_j), \Z_{\eta+\xi,i}^g(u_j)\right].
\end{equation*}
Given the same ECP, the mean interval score rewards narrower prediction intervals.

\subsection{Comparison of interval forecast accuracy}\label{sec:CIF}

In Table~\ref{tab:2}, we present the one-step-ahead to ten-step-ahead interval forecast errors based on the split and sequential conformal prediction methods. Between the split and sequential conformal prediction methods, the former leads to under-estimation, in which the empirical coverage probability is smaller than the nominal one. In contrast, the sequential conformal prediction method results in over-estimation, in which the empirical coverage probability is larger than the nominal one. In the split conformal prediction method, the sd is a more accurate summary statistic than the quantile. Between the two univariate time-series forecasting methods, namely the ARIMA and ETS, there is a marginal difference in terms of their interval forecast accuracy.

\begin{footnotesize}
\tabcolsep 0.028in
\renewcommand*{\arraystretch}{1.28}
\begin{longtable}{@{}lc*{17}{c}@{}}
\caption{\small Using the combination of two-way FANOVA, one-way FANOVA and functional factor model, we present the one-step-ahead to ten-step-ahead interval forecast accuracy for the female and male data at the 95\% nominal coverage probability. The methods with the smallest overall CPD and IS are highlighted in bold.\label{tab:2}} \\
\toprule
& \multicolumn{9}{c}{Female} & \multicolumn{9}{c}{Male} \\
\cmidrule(lr){2-10} \cmidrule(lr){11-19}
& \multicolumn{3}{c}{Split (sd)} & \multicolumn{3}{c}{Split (quantile)} & \multicolumn{3}{c}{Sequential} & \multicolumn{3}{c}{Split (sd)} & \multicolumn{3}{c}{Split (quantile)} & \multicolumn{3}{c}{Sequential} \\
\cmidrule(lr){2-4} \cmidrule(lr){5-7} \cmidrule(lr){8-10} \cmidrule(lr){11-13} \cmidrule(lr){14-16} \cmidrule(lr){17-19}
$h$ & ECP & CPD & IS & ECP & CPD & IS & ECP & CPD & IS & ECP & CPD & IS & ECP & CPD & IS & ECP & CPD & IS \\
\midrule
\endfirsthead
\toprule
$h$ & ECP & CPD & IS & ECP & CPD & IS & ECP & CPD & IS & ECP & CPD & IS & ECP & CPD & IS & ECP & CPD & IS \\
\midrule
\endhead
\multicolumn{9}{l}{\hspace{-.07in} \underline{ARIMA}} & \\
1 & 0.942 & 0.017 & 0.008 & 0.908 & 0.046 & 0.007 & 0.971 & 0.031 & 0.013   & 0.942 & 0.016 & 0.013 & 0.912 & 0.042 & 0.012 & 0.980 & 0.033 & 0.021 \\
2 & 0.930 & 0.027 & 0.009 & 0.891 & 0.063 & 0.008 & 0.972 & 0.032 & 0.014 & 0.938 & 0.019 & 0.014 & 0.902 & 0.052 & 0.013 & 0.981 & 0.034 & 0.021 \\
3 & 0.932 & 0.023 & 0.010 & 0.881 & 0.071 & 0.009 & 0.978 & 0.032 & 0.014 & 0.937 & 0.019 & 0.014 & 0.888 & 0.064 & 0.014 & 0.983 & 0.035 & 0.022 \\
4 & 0.930 & 0.025 & 0.011 & 0.871 & 0.081 & 0.009 & 0.981 & 0.035 & 0.016 & 0.933 & 0.023 & 0.015 & 0.878 & 0.075 & 0.015 & 0.987 & 0.037 & 0.023 \\
5 & 0.922 & 0.032 & 0.013 & 0.833 & 0.119 & 0.011 & 0.982 & 0.034 & 0.018 & 0.930 & 0.026 & 0.018 & 0.857 & 0.095 & 0.017 & 0.988 & 0.039 & 0.025 \\
6 & 0.921 & 0.034 & 0.015 & 0.807 & 0.145 & 0.013 & 0.980 & 0.032 & 0.019 & 0.926 & 0.029 & 0.018 & 0.838 & 0.114 & 0.019 & 0.988 & 0.039 & 0.026 \\
7 & 0.922 & 0.032 & 0.018 & 0.779 & 0.173 & 0.014 & 0.975 & 0.029 & 0.020 & 0.922 & 0.034 & 0.021 & 0.815 & 0.137 & 0.022 & 0.990 & 0.040 & 0.027 \\
8 & 0.939 & 0.018 & 0.019 & 0.765 & 0.187 & 0.014 & 0.975 & 0.029 & 0.022 & 0.924 & 0.032 & 0.025 & 0.781 & 0.171 & 0.025 & 0.991 & 0.041 & 0.028 \\
9 & 0.952 & 0.017 & 0.026 & 0.741 & 0.211 & 0.012 & 0.971 & 0.029 & 0.024 & 0.929 & 0.027 & 0.033 & 0.718 & 0.234 & 0.031 & 0.991 & 0.041 & 0.029 \\
10 & 0.940 & 0.024 & 0.085 & 0.673 & 0.277 & 0.012 & 0.969 & 0.028 & 0.025 & 0.935 & 0.023 & 0.094 & 0.618 & 0.332 & 0.043 & 0.993 & 0.043 & 0.030 \\
\cmidrule{1-19}
Mean & 0.933 & \textbf{0.025} & 0.021 & 0.815 & 0.137 & \textbf{0.011} & 0.975 & 0.031 & 0.019 & 0.932 & \textbf{0.025} & 0.027 & 0.821 & 0.132 & \textbf{0.021} & 0.987 & 0.038 & 0.025 \\
Median & 0.931 & \textbf{0.024} & 0.014 & 0.820 & 0.132 & \textbf{0.011} & 0.975 & 0.032 & 0.018 & 0.931 & \textbf{0.025} & 0.018 & 0.848 & 0.104 & \textbf{0.018} & 0.988 & 0.039 & 0.025 \\
\midrule
\multicolumn{9}{l}{\hspace{-.07in} \underline{ETS}}   & \\
1 & 0.927 & 0.028 & 0.008 & 0.882 & 0.072 & 0.007 & 0.963 & 0.030 & 0.013 & 0.940 & 0.017 & 0.013 & 0.910 & 0.044 & 0.012 & 0.979 & 0.033 & 0.020 \\
2 & 0.926 & 0.029 & 0.008 & 0.873 & 0.080 & 0.008 & 0.966 & 0.030 & 0.013 & 0.939 & 0.018 & 0.014 & 0.902 & 0.051 & 0.012 & 0.981 & 0.035 & 0.021 \\
3 & 0.928 & 0.027 & 0.009 & 0.862 & 0.090 & 0.008 & 0.968 & 0.030 & 0.015 & 0.940 & 0.018 & 0.014 & 0.889 & 0.063 & 0.013 & 0.983 & 0.035 & 0.022 \\
4 & 0.934 & 0.024 & 0.009 & 0.855 & 0.097 & 0.008 & 0.971 & 0.033 & 0.016 & 0.940 & 0.018 & 0.015 & 0.882 & 0.070 & 0.013 & 0.986 & 0.037 & 0.023 \\
5 & 0.932 & 0.026 & 0.009 & 0.838 & 0.114 & 0.008 & 0.971 & 0.033 & 0.018 & 0.938 & 0.021 & 0.016 & 0.863 & 0.089 & 0.015 & 0.986 & 0.038 & 0.025 \\
6 & 0.935 & 0.026 & 0.009 & 0.827 & 0.126 & 0.008 & 0.969 & 0.033 & 0.020 & 0.942 & 0.019 & 0.017 & 0.846 & 0.105 & 0.017 & 0.987 & 0.039 & 0.027 \\
7 & 0.941 & 0.022 & 0.012 & 0.815 & 0.137 & 0.009 & 0.968 & 0.032 & 0.020 & 0.943 & 0.018 & 0.021 & 0.831 & 0.121 & 0.019 & 0.985 & 0.037 & 0.026 \\
8 & 0.939 & 0.018 & 0.014 & 0.768 & 0.184 & 0.011 & 0.965 & 0.035 & 0.022 & 0.957 & 0.017 & 0.022 & 0.836 & 0.116 & 0.015 & 0.985 & 0.039 & 0.028 \\
9 & 0.931 & 0.025 & 0.022 & 0.676 & 0.276 & 0.017 & 0.959 & 0.036 & 0.023 & 0.962 & 0.019 & 0.032 & 0.814 & 0.138 & 0.017 & 0.987 & 0.039 & 0.029 \\
10 & 0.918 & 0.040 & 0.056 & 0.555 & 0.395 & 0.026 & 0.955 & 0.038 & 0.024 & 0.956 & 0.018 & 0.124 & 0.742 & 0.208 & 0.022 & 0.988 & 0.040 & 0.030 \\
\cmidrule{1-19}
Mean & 0.931 & \textbf{0.027} & 0.016 & 0.795 & 0.157 & \textbf{0.011} & 0.966 & 0.033 & 0.018 & 0.946 & \textbf{0.018} & 0.029 & 0.851 & 0.101 & \textbf{0.015} & 0.985 & 0.037 & 0.025 \\
Median & 0.931 & \textbf{0.026} & 0.009 & 0.832 & 0.120 & \textbf{0.008} & 0.967 & 0.033 & 0.019 & 0.941 & \textbf{0.018} & 0.017 & 0.854 & 0.097 & \textbf{0.015} & 0.986 & 0.038 & 0.025 \\
\bottomrule
\end{longtable}
\end{footnotesize}

From Table~\ref{tab:2}, it seems that Split sd is consistently competitive, while the Split quantile degrades at longer horizons. The sequential conformal prediction achieves stable high coverage at the cost of wider intervals (larger IS values).

Using the combination of two-way FANOVA and functional factor model, we present the one-step-ahead to ten-step-ahead interval forecast accuracy for the female and male data at the 95\% nominal coverage probability in Appendix~\ref{App:B}. Extending the scope beyond \cite{JSS24}, which considers only $h=1$, we apply our proposed methodology to analyze forecast horizons ranging from $h=1$ to $h=10$. The results of the corresponding prediction interval are reported in Appendix~\ref{App:JCGS_IFE}.

Overall, the split conformal approach calibrated via the standard deviation delivers the most competitive performance among the benchmark methods. Nevertheless, the proposed framework, which combines a FANOVA decomposition with a functional factor model to capture and forecast time-varying residual dynamics, achieves a clear improvement in the accuracy of the interval forecast. In particular, this approach consistently yields smaller interval forecast errors compared to the method proposed in \cite{JSS24}.

\section{Conclusion}\label{sec:7}

We present a novel statistical method for extracting patterns in high-dimensional functional time series and demonstrate its usage using Japanese subnational age- and sex-specific $\log$ mortality rates from 1975 to 2023. Using a two-way FANOVA, we decompose the HDFTS into a functional grand effect, a row effect, a column effect, an interaction term, and time-varying residuals. The time-varying residuals are further modeled via a one-way FANOVA, while the remainder term is captured by a functional factor model. For producing accurate point forecasts, the combination of a two-way FANOVA, a one-way FANOVA, and a functional factor model is recommended. Rather than capturing a generalized prefecture-gender interaction, the one-way FANOVA excels because its prefecture-specific components isolate localized variations. Accounting for these specific effects is precisely what drives the improvement in forecast accuracy. To facilitate reproducibility, the \Rlogo \ code for computing the point and interval forecast errors is available at the \href{https://github.com/cfjimenezv07/FANOVA_FFM}{GitHub repository}.

As a means of quantifying forecast uncertainty, we consider conformal prediction methods. Split conformal prediction requires sample splitting, which can lead to inferior interval forecast accuracy at longer forecast horizons, especially when the summary statistic is the pointwise quantile. From a univariate time-series model, such as AR($p$), the sequential conformal prediction gradually updates the predictive quantiles when new data arrive. Because it does not require calibration using a validation set, this conservative approach with larger IS values is recommended for quantifying finite-sample prediction uncertainty.

There are several ways in which the current paper may be extended, and we briefly outline three:
\begin{inparaenum}[1)]
\item In the sequential conformal prediction, we model the temporal dependence of the absolute residuals via an autoregressive process in a quantile regression. Other time-series models can also be applied.
\item While we consider modeling subnational age- and sex-specific mortality rates, one could also model life-table death counts observed over time, which themselves resemble a time series of probability density functions.
\item We utilize a hierarchical decomposition to incorporate interaction effects, which enhances empirical forecast accuracy. While this paper focuses on the predictive benefits of such terms, developing formal hypothesis tests for examining the significance of the interaction term, accounting for complex spatio-temporal dependencies, presents a valuable direction for future research.
\end{inparaenum}

\section*{Acknowledgment}

The authors are grateful for the insightful comments received from two reviewers and participants at the Recent Advances in Time Series (RATS) workshop in Cyprus. This research is financially supported by the Australian Research Council Discovery Project DP230102250 and the Australian Research Council Future Fellowship FT240100338. Jiménez-Varón gratefully acknowledges support from the EPSRC NeST Program Grant EP/X002195/1.

\newpage
\bibliographystyle{agsm}
\bibliography{FANOVA_FFM.bib}

@Article{SG12,
    author  = {Y. Sun and M. G. Genton},
    journal = {Journal of Agricultural, Biological, and Environmental Statistics},
    title   = {Functional median polish},
    year    = {2012},
    pages   = {354-376},
    volume  = {17}
}

@Article{SME03,
    author  = {D. J. Spitzner and J. S. Marron and G. K. Essick},
    journal = {Journal of the American Statistical Association: Applications \& Case Studies},
    title   = {Mixed-model functional {ANOVA} for studying human tactile perception},
    year    = {2003},
    pages   = {263-272},
    volume  = {98}
}

@article{SK22,
    author = {H. L. Shang and F. Kearney},
    title = {Dynamic functional time-series forecasts of foreign exchange implied volatility surfaces},
    journal = {International Journal of Forecasting},
    year = {2022},
    volume = {38},
    number = {3},
    pages = {1025-1049}
}

@Article{BR98,
    author  = {B. A. Brumback and J. A. Rice},
    journal = {Journal of the American Statistical Association: Theory and Methods},
    title   = {Smoothing spline models for the analysis of nested and crossed samples of curves},
    year    = {1998},
    pages   = {961-976},
    volume  = {93}
}

@Article{WKB03,
    author  = {Y. Wang and C. Ke and M. B. Brown},
    journal = {Biometrics},
    title   = {Shape-invariant modeling of circadian rhythms with random effects and smoothing spline {ANOVA} decomposition},
    year    = {2003},
    pages   = {804-812},
    volume  = {59}
}

@Article{KS10,
    author  = {C. G. Kaufman and S. R. Sain},
    journal = {Bayesian Analysis},
    title   = {Bayesian functional {ANOVA} modeling using {G}aussian process prior distributions},
    year    = {2010},
    pages   = {123-149},
    volume  = {5}
}

@article{ZD23,
    author = {Z. Zhou and H. Dette},
    title = {Statistical inference for high-dimensional panel functional time series},
    journal = {Journal of the Royal Statistical Society: Series B},
    year = {2023},
    volume = {85},
    number = {2},
    pages = {523-549}
}

@article{JSS25,
    author = {C. F. Jim\'{e}nez-Var\'{o}n and Y. Sun and H. L. Shang},
    title = {Forecasting density-valued functional panel data},
    journal = {Australian \& New Zealand Journal of Statistics},
    year = {2025},
    volume = {67},
    number = {3},
    pages = {401-415}
}

@article{CFQ+25,
    author = {J. Chang and Q. Fang and X. Qiao and Q. Yao},
    title = {On the modeling and prediction of high-dimensional functional time series},
    journal = {Journal of the American Statistical Association: Theory and Methods},
    year = {2025},
    volume = {120},
    number = {552},
    pages = {2181-2195}
}

@article{Shang26,
    author = {H. L. Shang},
    title = {Conformal prediction for functional time series: {A}pplication to age-specific mortality rates},
    journal = {Journal of Population Research},
    year = {2026},
    volume = {43},
    pages = {article number 14}
}

@article{MSZ16,
    author = {T. Maiti and S. Sinha and P-S. Zhong},
    title = {Functional mixed effects model for small area estimation},
    journal = {Scandinavian Journal of Statistics},
    year = {2016},
    volume = {43},
    number = {3},
    pages = {886-903}
}

@article{LBK+25,
    author = {T. Loredo and T. Budav\'{a}ri and D. Kent and D. Ruppert},
    title = {Bayesian functional data analysis in astronomy},
    journal = {Physical Sciences Forum},
    year = {2025},
    volume = {12},
    number = {1},
    pages = {12}
}

@Article{LLS24,
    author  = {D. Li and R. Li and H. L. Shang},
    journal = {The Annals of Statistics},
    title   = {Detection and estimation of structural breaks in high-dimensional functional time series},
    year    = {2024},
    number  = {4},
    pages   = {1716-1740},
    volume  = {52}
}

@Article{TSY22,
    author  = {C. Tang and H. L. Shang and Y. Yang},
    journal = {The Annals of Applied Statistics},
    title   = {Clustering and forecasting multiple functional time series},
    year    = {2022},
    pages   = {2523-2553},
    volume  = {16}
}

@article{Wood94,
    author = {Wood, S. N.},
    title = {Monotonic smoothing splines fitted by cross validation},
    journal = {SIAM Journal on Scientific Computing},
    volume = {15},
    number = {5},
    pages = {1126-1133},
    year = {1994},
}

@article{GSY19,
    author = {Y. Gao and H. L. Shang and Y. Yang},
    title = {High-dimensional functional time series forecasting: {A}n application to age-specific mortality rates},
    journal = {Journal of Multivariate Analysis},
    year = {2019},
    volume = {170},
    pages = {232-243}
}

@article{PS23,
    author = {E. Paparoditis and H. L. Shang},
    title = {Bootstrap prediction bands for functional time series},
    journal = {Journal of the American Statistical Association: Theory and Methods},
    year = {2023},
    volume = {118},
    number = {542},
    pages = {972-986}
}

@book{HK12,
    author = {L. Horv\'{a}th and P. Kokoszka},
    title = {Inference for Functional Data with Applications},
    publisher = {Springer},
    year = {2012},
    address = {New York}
}

@article{SH26,
    author = {H. L. Shang and S. Haberman},
    title = {Constructing prediction intervals for the age distribution of deaths},
    journal = {Scandinavian Actuarial Journal},
    year = {2026},
    volume = {2026},
    number = {5},
    pages = {469-486}
}

@Manual{Hyndman25,
    title = {demography: Forecasting Mortality, Fertility, Migration and Population Data},
    author = {Rob Hyndman},
    year = {2025},
    note = {R package version 2.0.1},
    url = {\url{https://CRAN.R-project.org/package=demography}}
}

@article{HS09,
    author = {R. J. Hyndman and H. L. Shang},
    title = {Forecasting functional time series},
    journal = {Journal of the Korean Statistical Society},
    year = {2009},
    volume = {38},
    pages = {199–211}
}

@article{ANH15,
    author = {A. Aue and D. D. Norinho and S. H\"{o}rmann},
    title = {On the prediction of stationary functional time series},
    journal = {Journal of the American Statistical Association: Theory and Methods},
    year = {2015},
    volume = {110},
    number = {509 },
    pages = {378-392}
}

@book{Coulmas07,
    author = {F. Coulmas},
    title = {Population Decline and Ageing in Japan -- the Social Consequences},
    publisher = {Routledge},
    year = {2007},
    address = {New York}
}

@Article{HNT23,
    author  = {M. Hallin and G. Nisol and S. Tavakoli},
    journal = {Journal of Time Series Analysis},
    title   = {Factor models for high-dimensional functional time series {I}: {R}epresentation results},
    year    = {2023},
    pages   = {578-600},
    volume  = {44}
}

@Article{GQW+26,
    author  = {S. Guo and X. Qiao and Q. Wang and Z. Wang},
    journal = {Journal of Business and Economic Statistics},
    title   = {Factor modeling for high-dimensional functional time series},
    year    = {2026},
    number  = {1},
    pages   = {106-119},
    volume  = {44}
}

@Article{TNH23,
    author  = {S. Tavakoli and G. Nisol and M. Hallin},
    journal = {Journal of Time Series Analysis},
    title   = {Factor models for high-dimensional functional time series {II}: {E}stimation and forecasting},
    year    = {2023},
    pages   = {601-621},
    volume  = {44}
}

@Book{Tukey77,
    author    = {J. Tukey},
    publisher = {Addison-Wesley},
    title     = {{Exploratory Data Analysis}},
    year      = {1977},
    address   = {Reading}
}

@book{KR17,
    author = {P. Kokoszka and M. Reimherr},
    title = {Introduction to Functional Data Analysis},
    publisher = {Chapman and Hall/CRC},
    year = {2017},
    address = {New York}
}

@Article{Shang25,
    author  = {H. L. Shang},
    journal = {Journal of Applied Statistics},
    title   = {Forecasting a time series of {L}orenz curves: {O}ne-way functional analysis of variance},
    year    = {2025},
    number  = {15},
    pages   = {2924-2940},
    volume  = {52}
}

@Article{JSS24,
    author  = {C. F. {Jim\'{e}nez-Var\'{o}n} and Y. Sun and H. L. Shang},
    journal = {Journal of Computational and Graphical Statistics},
    title   = {Forecasting high-dimensional functional time series: {A}pplication to sub-national age-specific mortality},
    year    = {2024},
    number  = {4},
    pages   = {1160-1174},
    volume  = {33}
}

@Article{LLS+26,
    author  = {C. Leng and D. Li and H. L. Shang and Y. Xia},
    journal = {Journal of Business and Economic Statistics},
    title   = {Covariance function estimation for high-dimensional functional time series with dual factor structures},
    year    = {2026},
    volume  = {in press}
}

@Manual{JMD26,
    title  = {{National Institute of Population and Social Security Research}},
    author = {{Japanese Mortality Database}},
    note   = {Available at \url{https://www.ipss.go.jp/p-toukei/JMD/index-en.asp} (data downloaded on January 24, 2026)},
    year   = {2026}
}

@Article{HG18,
    author  = {C. Happ and S. Greven},
    journal = {Journal of the American Statistical Association: Theory and Methods},
    title   = {Multivariate functional principal component analysis for data observed on different (dimensional) domains},
    year    = {2018},
    number  = {522},
    pages   = {649-659},
    volume  = {113}
}

@book{RS06,
    title={Functional Data Analysis},
    author={Ramsay, J. and Silverman, B.W.},
    series={Springer Series in Statistics},
    year={2006},
    publisher={Springer},
    edition = {2nd},
    address = {New York}
}

@Book{ZW06,
    author = {E. Zivot and J. Wang},
    title = {{Modeling Financial Time Series with S-PLUS}},
    publisher = {Springer},
    year = {2006},
    address = {New York}
}

@Article{GR07,
    author  = {T. Gneiting and A. E. Raftery},
    journal = {Journal of the American Statistical Association: Review Article},
    title   = {Strictly proper scoring rules, prediction, and estimation},
    year    = {2007},
    number  = {477},
    pages   = {359-378},
    volume  = {102}
}

@ARTICLE{Akaike1969,
    title   = "Power spectrum estimation through autoregressive model fitting",
    author  = "Akaike, Hirotugu",
    journal = "Annals of the Institute of Statistical Mathematics",
    volume  =  21,
    number  =  1,
    pages   = "407--419",
    year    =  1969
}

@article{Rogers2010,
    author = {Rogers, Richard G. and Everett, Bethany G. and Onge, Jarron M. Saint and Krueger, Patrick M.},
    title = {Social, behavioral, and biological factors, and sex differences in mortality},
    journal = {Demography},
    volume = {47},
    number = {3},
    pages = {555-578},
    year = {2010}    
}

@techreport{Liu2026,
    author={Xialu Liu and Xin Wang},
    title={Regularized estimation of the loading matrix in factor models for high-dimensional time series}, 
    institution = {arXiv},
    year = {2026},
    url={\url{https://arxiv.org/abs/2506.11232}}
}

@article{Delaigle2019,
    author = {Delaigle, Aurore and Hall, Peter and Pham, Tung},
    title = {Clustering Functional Data into Groups by Using Projections},
    journal = {Journal of the Royal Statistical Society Series B: Statistical Methodology},
    volume = {81},
    number = {2},
    pages = {271-304},
    year = {2019}
}

@article{MCS,
    author = {Hansen, Peter R. and Lunde, Asger and Nason, James M.},
    title = {The Model Confidence Set},
    journal = {Econometrica},
    volume = {79},
    number = {2},
    pages = {453-497},
    year = {2011}
}

@article{Ikeda11,
    author = {Ikeda, N. and Saito, E. and Kondo, N. and others},
    title = {What has made the population of {Japan} healthy?},
    journal = {The Lancet},
    volume = {378},
    number = {9796},
    pages = {1094--1105},
    year = {2011}
}

@article{Nomura17,
    author = {Nomura, S. and Sakamoto, H. and Glenn, S. and Tsugawa, Y. and Abe, S. K. and Rahman, M. M. and Brown, J. C. and Ezoe, S. and Fitzmaurice, C. and Inokuchi, T. and Kassebaum, N. J. and Kawakami, N. and Kita, Y. and Kondo, N. and Lim, S. S. and Maruyama, S. and Miyata, H. and Mooney, M. D. and Naghavi, M. and Onoda, T. and Ota, E. and Otake, Y. and Roth, G. A. and Saito, E. and Tabuchi, T. and Takasaki, Y. and Tanimura, T. and Uechi, M. and Vos, T. and Wang, H. and Inoue, M. and Murray, C. J. L. and Shibuya, K.},
    title = {Population health and regional variations of disease burden in {Japan}, 1990--2015: {A} systematic subnational analysis for the {Global Burden of Disease Study 2015}},
    journal = {The Lancet},
    volume = {390},
    number = {10101},
    pages = {1521--1538},
    year = {2017}
}

@article{Tsuboi22,
    author = {Tsuboi, S. and Mine, T. and Fukushima, T.},
    title = {Heterogeneous trends of premature mortalities in {Japan}: joinpoint regression analysis of years of life lost from 2011 to 2019},
    journal = {Dialogues in Health},
    volume = {1},
    pages = {100071},
    year = {2022}
}

@article{Willcox07,
    author = {Willcox, B. J. and Willcox, D. C. and Todoriki, H. and Fujiyoshi, A. Yano, K. and He, Q. and Curb, J. D. and Suzuki, M.},
    title = {Caloric restriction, the traditional {Okinawan} diet, and healthy aging: the diet of the world's longest-lived people and its potential impact on morbidity and life span},
    journal = {Annals of the New York Academy of Sciences},
    volume = {1114},
    pages = {434--455},
    year = {2007}
}

\newpage

\appendix

\section{Proof of Proposition~\ref{prop:identifiability}}\label{proof:prop1}

\begin{proof}
The result follows from the fact that the second-stage OWA is applied to the residual process from the first-stage TWA, which is already centered with respect to both prefecture and gender dimensions.

For part {\rm (i)}, consider the estimator of the second-stage functional grand mean:
\begin{equation*}
\widehat{\theta}^{g}(u) = \frac{1}{NT}\sum_{i=1}^{N}\sum_{t=1}^{T}
\X_{it}^{g,\dagger}(u).
\end{equation*}
From the first-stage TWA identifiability constraints in Equation~\eqref{eq:constraints}, the residual process satisfies
\begin{equation*}
\sum_{i=1}^{N}\X_{it}^{g,\dagger}(u)=0, \qquad t=1, 2,\ldots,T.
\end{equation*}
Therefore,
\begin{equation*}
\widehat{\theta}^{g}(u) = \frac{1}{NT}\sum_{t=1}^{T}\left(\sum_{i=1}^{N}\X_{it}^{g,\dagger}(u)\right) = \frac{1}{NT}\sum_{t=1}^{T}0 =0.
\end{equation*}
Hence, the second-stage grand mean is identically zero, implying that it does not provide any additional information and can be omitted from the OWA representation.

For part {\rm (ii)}, the estimated prefecture-specific effect in the second-stage OWA is given by
\begin{equation*}
\widehat{\eta}^{g}_{i}(u)
=
\frac{1}{T}
\sum_{t=1}^{T}
\X_{it}^{g,\dagger}(u)
-
\widehat{\theta}^{g}(u).
\end{equation*}
Since $\widehat{\theta}^{g}(u)\equiv0$ from part {\rm (i)}, this reduces to
\begin{equation*}
\widehat{\eta}^{g}_{i}(u)
=
\frac{1}{T}
\sum_{t=1}^{T}
\X_{it}^{g,\dagger}(u).
\end{equation*}
Summing over prefectures gives
\begin{align*}
\sum_{i=1}^{N}\widehat{\eta}^{g}_{i}(u)
&=
\frac{1}{T}
\sum_{t=1}^{T}
\sum_{i=1}^{N}
\X_{it}^{g,\dagger}(u)\\
&=
\frac{1}{T}
\sum_{t=1}^{T}0
=0,
\end{align*}
where the second equality follows directly from the first-stage residual constraint. Thus, the extracted prefecture-gender interaction profiles satisfy the usual zero-sum identifiability condition.

For part {\rm (iii)}, define the remaining stochastic component after the second-stage decomposition as
\begin{equation*}
\mathcal{R}^{g}_{it}(u)
=
\X_{it}^{g,\dagger}(u)
-
\widehat{\eta}^{g}_{i}(u).
\end{equation*}
Taking the cross-sectional sum at each time point yields
\begin{align*}
\sum_{i=1}^{N}\mathcal{R}^{g}_{it}(u)
&=
\sum_{i=1}^{N}\X_{it}^{g,\dagger}(u)
-
\sum_{i=1}^{N}\widehat{\eta}^{g}_{i}(u).
\end{align*}
The first term is zero by the TWA constraint, while the second term is zero by part {\rm (ii)}. Therefore,
\begin{equation*}
\sum_{i=1}^{N}\mathcal{R}^{g}_{it}(u)=0.
\end{equation*}
Consequently, the sequential TWA--OWA decomposition preserves cross-sectional identifiability at every stage.
\end{proof}

\clearpage
\section{Point forecast accuracy based on ARIMA}\label{App:A}

In Table~\ref{tab:3}, we present an additional comparison of point forecast errors based on the ARIMA forecasting method. Between the ETS and ARIMA forecasting methods, there is an advantage of using the ETS method, as it consistently yields lower mean and median RMSFE and MAFE values across the majority of forecast horizons for both female and male datasets.
\begin{center}
\renewcommand*{\arraystretch}{0.73}
\tabcolsep 0.195in
\begin{small}
\begin{longtable}{@{}llcccccc@{}}
\caption{\small Averaged across 47 prefectures, we evaluate and compare the point forecast accuracy measured by RMSE and MAE. Forecasting method is ARIMA. The method with the smallest overall error is highlighted in bold. TWA + OWA + FFM represents our method combining two-way FANOVA, one-way FANOVA and a functional factor model. TWA + FFM represents our method combining two-way FANOVA and a functional factor model. TWA + MFTS is a benchmark method combining two-way FANOVA and multivariate functional time series method in \cite{JSS24}.}\label{tab:3} \\
\toprule
& & \multicolumn{3}{c}{Female} & \multicolumn{3}{c}{Male} \\
\cmidrule(lr){3-5} \cmidrule(lr){6-8}
Metric & $h$ & \makecell{TWA+\\OWA+FFM} & \makecell{TWA+\\FFM} & \makecell{TWA+\\MFTS} & \makecell{TWA+\\OWA+FFM} & \makecell{TWA+\\FFM} & \makecell{TWA+\\MFTS} \\ 
\midrule
\endfirsthead
\toprule
& & \multicolumn{3}{c}{Female} & \multicolumn{3}{c}{Male} \\
\cmidrule(lr){3-5} \cmidrule(lr){6-8}
Metric & $h$ & \makecell{TWA+\\OWA+FFM} & \makecell{TWA+\\FFM} & \makecell{TWA+\\MFTS} & \makecell{TWA+\\OWA+FFM} & \makecell{TWA+\\FFM} & \makecell{TWA+\\MFTS} \\ 
\midrule
\endhead
\midrule
\multicolumn{8}{r}{{Continued on next page}} \\
\endfoot
\endlastfoot
RMSFE & 1  & 0.0025 & 0.0028 & 0.0034 & 0.0045 & 0.0049 & 0.0055 \\
      & 2  & 0.0028 & 0.0030 & 0.0039 & 0.0044 & 0.0050 & 0.0057 \\
      & 3  & 0.0031 & 0.0031 & 0.0043 & 0.0042 & 0.0050 & 0.0059 \\
      & 4  & 0.0030 & 0.0032 & 0.0048 & 0.0045 & 0.0052 & 0.0061 \\
      & 5  & 0.0026 & 0.0033 & 0.0052 & 0.0053 & 0.0057 & 0.0063 \\
      & 6  & 0.0027 & 0.0035 & 0.0057 & 0.0061 & 0.0063 & 0.0066 \\
      & 7  & 0.0032 & 0.0037 & 0.0063 & 0.0069 & 0.0070 & 0.0071 \\
      & 8  & 0.0040 & 0.0039 & 0.0076 & 0.0063 & 0.0065 & 0.0080 \\
      & 9  & 0.0050 & 0.0044 & 0.0091 & 0.0043 & 0.0054 & 0.0094 \\
      & 10 & 0.0050 & 0.0043 & 0.0095 & 0.0043 & 0.0048 & 0.0094 \\
\cmidrule{2-8} 
      & Mean   & \textbf{0.0034} & 0.0035 & 0.0060 & \textbf{0.0051} & 0.0056 & 0.0070 \\
      & Median & \textbf{0.0030} & 0.0034 & 0.0055 & \textbf{0.0045} & 0.0053 & 0.0064 \\
\midrule
MAFE  & 1  & 0.0010 & 0.0011 & 0.0013 & 0.0017 & 0.0019 & 0.0021 \\
      & 2  & 0.0011 & 0.0011 & 0.0014 & 0.0017 & 0.0019 & 0.0022 \\
      & 3  & 0.0012 & 0.0012 & 0.0015 & 0.0017 & 0.0020 & 0.0023 \\
      & 4  & 0.0012 & 0.0012 & 0.0017 & 0.0019 & 0.0021 & 0.0024 \\
      & 5  & 0.0011 & 0.0013 & 0.0019 & 0.0022 & 0.0023 & 0.0025 \\
      & 6  & 0.0011 & 0.0013 & 0.0021 & 0.0025 & 0.0026 & 0.0026 \\
      & 7  & 0.0012 & 0.0014 & 0.0023 & 0.0028 & 0.0029 & 0.0028 \\
      & 8  & 0.0015 & 0.0014 & 0.0028 & 0.0025 & 0.0026 & 0.0031 \\
      & 9  & 0.0019 & 0.0015 & 0.0033 & 0.0017 & 0.0021 & 0.0036 \\
      & 10 & 0.0019 & 0.0015 & 0.0035 & 0.0016 & 0.0019 & 0.0037 \\
\cmidrule{2-8} 
      & Mean   & \textbf{0.0013} & \textbf{0.0013} & 0.0022 & \textbf{0.0020} & 0.0022 & 0.0027 \\
      & Median & \textbf{0.0012} & 0.0013 & 0.0020 & \textbf{0.0018} & 0.0021 & 0.0025 \\
\bottomrule
\end{longtable}
\end{small}
\end{center}

\vspace{-.6in}

To clearly trace these variations, Figure~\ref{fig:arima_grid} provides the visual counterparts to Table~\ref{tab:3}. The graphical paths show that while the benchmark TWA+MFTS method suffers from a rapid, sharp accumulation of forecast errors as the horizon expands, our proposed functional factor frameworks remain remarkably stable and robust, successfully mitigating error explosion at longer horizons.

\begin{figure}[!htb]
\centering
{\includegraphics[width=0.485\textwidth]{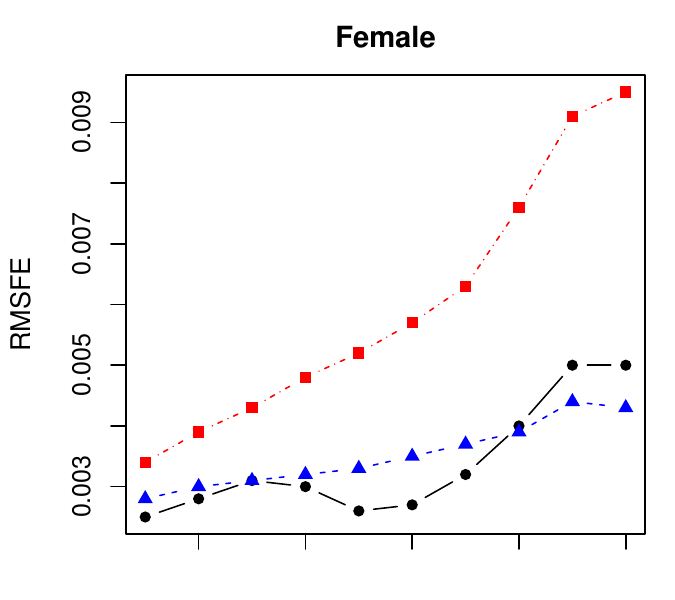}}
\quad
{\includegraphics[width=0.485\textwidth]{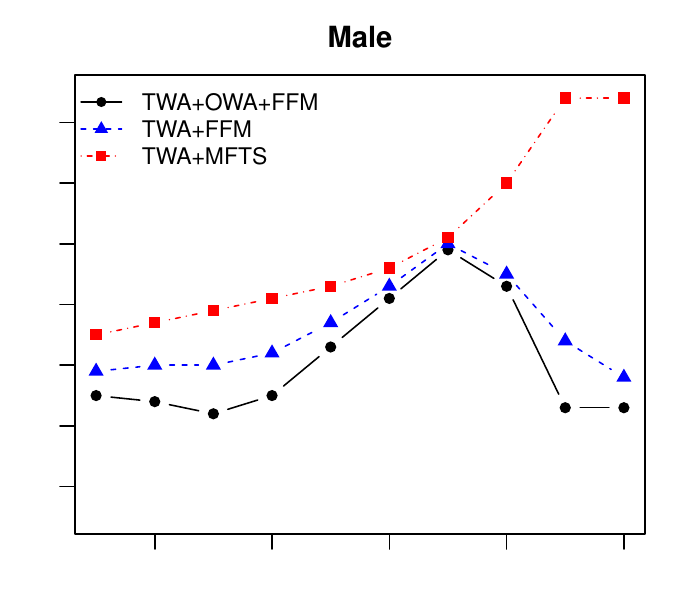}} 
\\
{\includegraphics[width=0.485\textwidth]{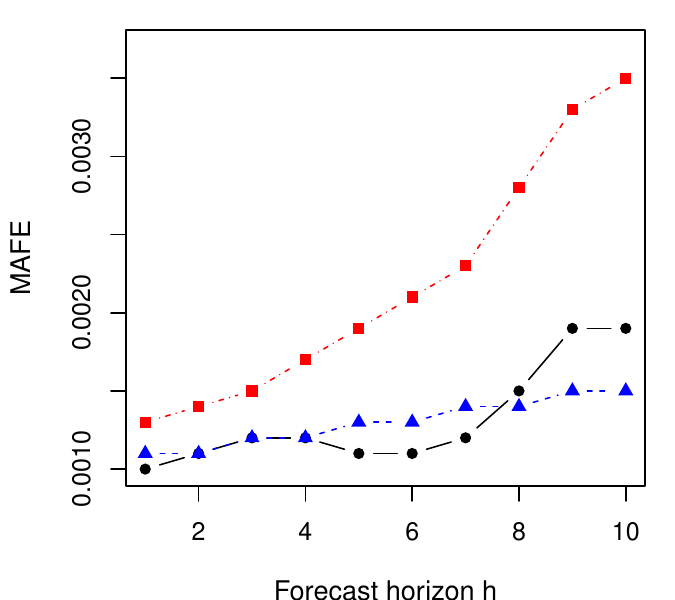}} 
\quad
{\includegraphics[width=0.485\textwidth]{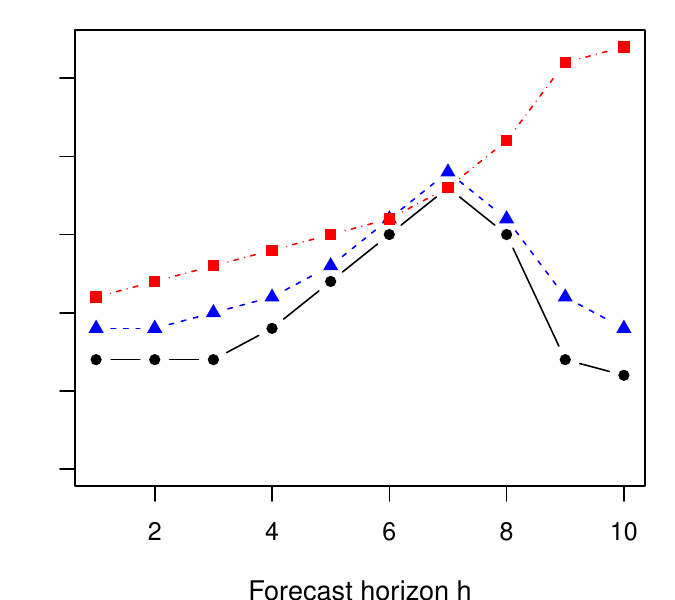}}
\caption{\small ARIMA point forecast error comparison across expanding horizons $h=1$ to $10$.}
\label{fig:arima_grid}
\end{figure}

\clearpage

\section{Interval forecast accuracy from two-way FANOVA and functional factor model}\label{App:B}

Using the combination of two-way FANOVA and functional factor model, in Table~\ref{tab:4}, we present the one-step-ahead to ten-step-ahead interval forecast accuracy for the female and male data at the 95\% nominal coverage probability.
\begin{footnotesize}
\tabcolsep 0.027in
\renewcommand*{\arraystretch}{0.9}
\begin{longtable}{@{}lc*{8}{c}*{9}{c}@{}}
\caption{\small Interval forecast accuracy for the female and male data: one-step-ahead to ten-step-ahead forecasts. TWA+FFM represents two-way FANOVA and functional factor model for the 95\% nominal coverage probability. The methods with the smallest overall CPD and IS are highlighted in bold.}\label{tab:4} \\
\toprule
& \multicolumn{9}{c}{Female} & \multicolumn{9}{c}{Male} \\
\cmidrule(lr){2-10} \cmidrule(lr){11-19}
& \multicolumn{3}{c}{Split(sd)} & \multicolumn{3}{c}{Split(quantile)} & \multicolumn{3}{c}{Seq} & \multicolumn{3}{c}{Split(sd)} & \multicolumn{3}{c}{Split(quantile)} & \multicolumn{3}{c}{Seq} \\
\cmidrule(lr){2-4} \cmidrule(lr){5-7} \cmidrule(lr){8-10} \cmidrule(lr){11-13} \cmidrule(lr){14-16} \cmidrule(lr){17-19}
$h$ & ECP & CPD & IS & ECP & CPD & IS & ECP & CPD & IS & ECP & CPD & IS & ECP & CPD & IS & ECP & CPD & IS \\
\midrule
\endfirsthead
\toprule
$h$ & ECP & CPD & IS & ECP & CPD & IS & ECP & CPD & IS & ECP & CPD & IS & ECP & CPD & IS & ECP & CPD & IS \\
\midrule
\endhead
ARIMA & \\ 
1 & 0.939 & 0.018 & 0.008 & 0.899 & 0.054 & 0.008 & 0.975 & 0.031 & 0.014 & 0.946 & 0.012 & 0.014 & 0.918 & 0.036 & 0.013 & 0.983 & 0.034 & 0.021 \\
2 & 0.932 & 0.023 & 0.009 & 0.888 & 0.066 & 0.009 & 0.976 & 0.032 & 0.014 & 0.938 & 0.019 & 0.014 & 0.905 & 0.049 & 0.014 & 0.984 & 0.034 & 0.021 \\
3 & 0.925 & 0.030 & 0.010 & 0.870 & 0.083 & 0.011 & 0.976 & 0.032 & 0.015 & 0.931 & 0.025 & 0.015 & 0.888 & 0.066 & 0.015 & 0.985 & 0.036 & 0.022 \\
4 & 0.921 & 0.033 & 0.011 & 0.855 & 0.097 & 0.013 & 0.979 & 0.035 & 0.016 & 0.923 & 0.033 & 0.016 & 0.871 & 0.083 & 0.016 & 0.989 & 0.039 & 0.023 \\
5 & 0.912 & 0.041 & 0.013 & 0.833 & 0.119 & 0.016 & 0.979 & 0.034 & 0.018 & 0.912 & 0.044 & 0.018 & 0.849 & 0.105 & 0.019 & 0.990 & 0.040 & 0.024 \\
6 & 0.909 & 0.044 & 0.017 & 0.815 & 0.137 & 0.018 & 0.977 & 0.033 & 0.019 & 0.905 & 0.050 & 0.019 & 0.823 & 0.130 & 0.022 & 0.990 & 0.040 & 0.025 \\
7 & 0.911 & 0.042 & 0.018 & 0.793 & 0.159 & 0.019 & 0.977 & 0.031 & 0.020 & 0.900 & 0.055 & 0.022 & 0.796 & 0.158 & 0.027 & 0.990 & 0.040 & 0.026 \\
8 & 0.933 & 0.022 & 0.018 & 0.787 & 0.165 & 0.016 & 0.975 & 0.030 & 0.021 & 0.898 & 0.057 & 0.025 & 0.764 & 0.189 & 0.030 & 0.991 & 0.041 & 0.027 \\
9 & 0.950 & 0.024 & 0.028 & 0.765 & 0.187 & 0.013 & 0.975 & 0.030 & 0.023 & 0.911 & 0.045 & 0.034 & 0.701 & 0.252 & 0.039 & 0.990 & 0.040 & 0.028 \\
10& 0.908 & 0.048 & 0.088 & 0.695 & 0.255 & 0.012 & 0.975 & 0.029 & 0.024 & 0.896 & 0.057 & 0.092 & 0.610 & 0.340 & 0.048 & 0.992 & 0.042 & 0.029 \\
\cmidrule{1-19}
Mean & 0.924 & \textbf{0.033} & 0.022 & 0.820 & 0.132 & \textbf{0.013} & 0.976 & \textbf{0.032} & 0.018 & 0.916 & 0.040 & 0.027 & 0.812 & 0.141 & 0.024 & 0.988 & \textbf{0.039} & \textbf{0.024} \\
Median & 0.923 & \textbf{0.032} & 0.015 & 0.824 & 0.128 & \textbf{0.013} & 0.976 & \textbf{0.031} & 0.018 & 0.911 & 0.044 & \textbf{0.018} & 0.836 & 0.118 & 0.020 & 0.990 & \textbf{0.040} & 0.025 \\
\midrule
ETS    & \\
1 & 0.938 & 0.018 & 0.008 & 0.898 & 0.055 & 0.007 & 0.973 & 0.030 & 0.013 & 0.947 & 0.012 & 0.014 & 0.918 & 0.036 & 0.013 & 0.981 & 0.032 & 0.020 \\
2 & 0.932 & 0.023 & 0.009 & 0.886 & 0.067 & 0.009 & 0.974 & 0.031 & 0.013 & 0.944 & 0.014 & 0.015 & 0.908 & 0.046 & 0.013 & 0.981 & 0.032 & 0.020 \\
3 & 0.926 & 0.028 & 0.009 & 0.869 & 0.084 & 0.010 & 0.975 & 0.032 & 0.015 & 0.939 & 0.018 & 0.015 & 0.893 & 0.061 & 0.014 & 0.983 & 0.033 & 0.021 \\
4 & 0.923 & 0.032 & 0.010 & 0.857 & 0.096 & 0.010 & 0.977 & 0.033 & 0.015 & 0.932 & 0.023 & 0.016 & 0.873 & 0.081 & 0.016 & 0.984 & 0.035 & 0.021 \\
5 & 0.914 & 0.040 & 0.011 & 0.831 & 0.121 & 0.012 & 0.979 & 0.034 & 0.016 & 0.927 & 0.030 & 0.018 & 0.844 & 0.109 & 0.020 & 0.985 & 0.035 & 0.022 \\
6 & 0.907 & 0.046 & 0.012 & 0.810 & 0.142 & 0.014 & 0.978 & 0.033 & 0.017 & 0.929 & 0.029 & 0.018 & 0.830 & 0.122 & 0.021 & 0.985 & 0.036 & 0.022 \\
7 & 0.906 & 0.048 & 0.013 & 0.785 & 0.167 & 0.015 & 0.977 & 0.032 & 0.017 & 0.940 & 0.022 & 0.020 & 0.832 & 0.121 & 0.020 & 0.984 & 0.035 & 0.022 \\
8 & 0.912 & 0.042 & 0.015 & 0.763 & 0.189 & 0.016 & 0.975 & 0.030 & 0.018 & 0.942 & 0.023 & 0.020 & 0.810 & 0.142 & 0.022 & 0.986 & 0.036 & 0.022 \\
9 & 0.920 & 0.034 & 0.019 & 0.725 & 0.227 & 0.017 & 0.974 & 0.029 & 0.018 & 0.945 & 0.024 & 0.031 & 0.784 & 0.169 & 0.024 & 0.986 & 0.036 & 0.022 \\
10& 0.888 & 0.065 & 0.047 & 0.631 & 0.319 & 0.020 & 0.971 & 0.029 & 0.019 & 0.921 & 0.038 & 0.117 & 0.713 & 0.237 & 0.026 & 0.988 & 0.038 & 0.023 \\
\cmidrule{1-19}
Mean & 0.917 & 0.038 & 0.015 & 0.806 & 0.147 & \textbf{0.013} & 0.975 & \textbf{0.031} & \textbf{0.016} & 0.937 & \textbf{0.023} & 0.028 & 0.841 & 0.112 & \textbf{0.019} & 0.984 & 0.035 & 0.021 \\
Median & 0.917 & 0.037 & \textbf{0.011} & 0.820 & 0.132 & 0.013 & 0.975 & \textbf{0.031} & \textbf{0.016} & 0.940 & \textbf{0.023} & \textbf{0.018} & 0.838 & 0.115 & 0.020 & 0.985 & \textbf{0.035} & 0.022 \\
\bottomrule
\end{longtable}
\end{footnotesize}

\clearpage

\section{Interval forecast accuracy using two-way FANOVA of \texorpdfstring{\cite{JSS24}}{} and conformal prediction}\label{App:JCGS_IFE}

Extending the analysis of \cite{JSS24}, we consider forecast horizons ranging from $h=1$ to $h=10$. The results of the corresponding prediction interval are reported in Table~\ref{tab:5}.
\begin{footnotesize}
\tabcolsep 0.027in
\renewcommand*{\arraystretch}{0.94}
\begin{longtable}{@{} l c *{8}{c}  *{9}{c} @{}}
\caption{\small Interval forecast accuracy for the female and male data: one-step-ahead to ten-step-ahead. TWA+MFTS represents two-way FANOVA and multivariate functional time series method for the 95\% nominal coverage probability. The methods with the smallest overall CPD and IS are highlighted in bold.\label{tab:5}} \\
\toprule
& \multicolumn{9}{c}{Female} & \multicolumn{9}{c}{Male} \\
\cmidrule(lr){2-10} \cmidrule(lr){11-19}
& \multicolumn{3}{c}{Split (sd)} & \multicolumn{3}{c}{Split (quantile)} & \multicolumn{3}{c}{Seq} & \multicolumn{3}{c}{Split (sd)} & \multicolumn{3}{c}{Split (quantile)} & \multicolumn{3}{c}{Seq} \\
\cmidrule(lr){2-4} \cmidrule(lr){5-7} \cmidrule(lr){8-10} \cmidrule(lr){11-13} \cmidrule(lr){14-16} \cmidrule(lr){17-19}
$h$ & ECP & CPD & IS & ECP & CPD & IS & ECP & CPD & IS & ECP & CPD & IS & ECP & CPD & IS & ECP & CPD & IS \\
\midrule
\endfirsthead
\toprule
$h$ & ECP & CPD & IS & ECP & CPD & IS & ECP & CPD & IS & ECP & CPD & IS & ECP & CPD & IS & ECP & CPD & IS \\
\midrule
\endhead
ARIMA & \\ 
1  & 0.930 & 0.028 & 0.010 & 0.880 & 0.075 & 0.010 & 0.982 & 0.036 & 0.024 & 0.938 & 0.019 & 0.015 & 0.912 & 0.043 & 0.014 & 0.990 & 0.040 & 0.028 \\
2  & 0.919 & 0.037 & 0.026 & 0.860 & 0.095 & 0.013 & 0.981 & 0.035 & 0.023 & 0.932 & 0.023 & 0.016 & 0.901 & 0.053 & 0.015 & 0.990 & 0.040 & 0.027 \\
3  & 0.916 & 0.039 & 0.011 & 0.842 & 0.110 & 0.012 & 0.981 & 0.035 & 0.023 & 0.927 & 0.025 & 0.016 & 0.884 & 0.066 & 0.016 & 0.989 & 0.041 & 0.027 \\
4  & 0.913 & 0.041 & 0.013 & 0.824 & 0.128 & 0.013 & 0.982 & 0.035 & 0.024 & 0.920 & 0.033 & 0.016 & 0.870 & 0.081 & 0.017 & 0.990 & 0.041 & 0.028 \\
5  & 0.903 & 0.050 & 0.016 & 0.800 & 0.152 & 0.015 & 0.983 & 0.036 & 0.022 & 0.912 & 0.040 & 0.017 & 0.850 & 0.101 & 0.018 & 0.991 & 0.042 & 0.029 \\
6  & 0.904 & 0.051 & 0.016 & 0.780 & 0.173 & 0.017 & 0.982 & 0.035 & 0.021 & 0.910 & 0.043 & 0.019 & 0.824 & 0.127 & 0.020 & 0.989 & 0.041 & 0.027 \\
7  & 0.904 & 0.051 & 0.019 & 0.750 & 0.203 & 0.020 & 0.981 & 0.032 & 0.021 & 0.909 & 0.042 & 0.021 & 0.800 & 0.150 & 0.023 & 0.988 & 0.042 & 0.028 \\
8  & 0.915 & 0.041 & 0.024 & 0.717 & 0.236 & 0.023 & 0.979 & 0.031 & 0.021 & 0.915 & 0.038 & 0.024 & 0.768 & 0.182 & 0.026 & 0.988 & 0.042 & 0.028 \\
9  & 0.931 & 0.029 & 0.033 & 0.658 & 0.294 & 0.026 & 0.977 & 0.030 & 0.020 & 0.917 & 0.039 & 0.035 & 0.710 & 0.241 & 0.033 & 0.989 & 0.043 & 0.027 \\
10 & 0.913 & 0.041 & 0.075 & 0.564 & 0.386 & 0.032 & 0.976 & 0.028 & 0.020 & 0.916 & 0.038 & 0.089 & 0.611 & 0.339 & 0.045 & 0.987 & 0.043 & 0.027 \\
\cmidrule{1-19}
Mean   & 0.915 & 0.041 & 0.024 & 0.767 & 0.185 & \textbf{0.018} & 0.981 & \textbf{0.033} & 0.022 & 0.920 & \textbf{0.034} & 0.027 & 0.813 & 0.138 & \textbf{0.023} & 0.989 & 0.041 & 0.028 \\
Median & 0.914 & 0.041 & 0.018 & 0.790 & 0.162 & \textbf{0.016} & 0.981 & \textbf{0.035} & 0.022 & 0.917 & \textbf{0.038} & \textbf{0.018} & 0.837 & 0.114 & \textbf{0.019} & 0.989 & 0.042 & 0.028 \\
\midrule
ETS    & \\
1  & 0.935 & 0.022 & 0.008 & 0.889 & 0.065 & 0.008 & 0.980 & 0.034 & 0.023 & 0.941 & 0.016 & 0.015 & 0.911 & 0.044 & 0.014 & 0.989 & 0.040 & 0.028 \\
2  & 0.929 & 0.027 & 0.009 & 0.877 & 0.077 & 0.008 & 0.981 & 0.033 & 0.023 & 0.938 & 0.018 & 0.016 & 0.899 & 0.055 & 0.015 & 0.989 & 0.039 & 0.027 \\
3  & 0.923 & 0.032 & 0.010 & 0.860 & 0.092 & 0.009 & 0.981 & 0.035 & 0.023 & 0.931 & 0.024 & 0.016 & 0.883 & 0.067 & 0.016 & 0.989 & 0.039 & 0.027 \\
4  & 0.918 & 0.036 & 0.010 & 0.848 & 0.104 & 0.010 & 0.981 & 0.035 & 0.024 & 0.928 & 0.026 & 0.017 & 0.871 & 0.079 & 0.017 & 0.989 & 0.039 & 0.028 \\
5  & 0.915 & 0.040 & 0.011 & 0.828 & 0.124 & 0.011 & 0.983 & 0.036 & 0.023 & 0.928 & 0.027 & 0.018 & 0.855 & 0.095 & 0.019 & 0.989 & 0.039 & 0.028 \\
6  & 0.912 & 0.043 & 0.012 & 0.812 & 0.140 & 0.012 & 0.981 & 0.035 & 0.022 & 0.925 & 0.031 & 0.019 & 0.836 & 0.114 & 0.021 & 0.988 & 0.038 & 0.027 \\
7  & 0.909 & 0.046 & 0.013 & 0.788 & 0.164 & 0.013 & 0.978 & 0.033 & 0.022 & 0.928 & 0.029 & 0.021 & 0.823 & 0.127 & 0.022 & 0.987 & 0.038 & 0.027 \\
8  & 0.911 & 0.042 & 0.016 & 0.757 & 0.195 & 0.016 & 0.978 & 0.032 & 0.021 & 0.935 & 0.028 & 0.022 & 0.805 & 0.145 & 0.023 & 0.986 & 0.037 & 0.027 \\
9  & 0.918 & 0.037 & 0.022 & 0.711 & 0.240 & 0.020 & 0.976 & 0.031 & 0.021 & 0.940 & 0.025 & 0.034 & 0.776 & 0.174 & 0.026 & 0.988 & 0.038 & 0.027 \\
10 & 0.898 & 0.054 & 0.049 & 0.607 & 0.343 & 0.029 & 0.973 & 0.032 & 0.020 & 0.937 & 0.030 & 0.101 & 0.704 & 0.246 & 0.029 & 0.989 & 0.038 & 0.027 \\
\cmidrule{1-19}
Mean   & 0.917 & 0.038 & 0.016 & 0.798 & 0.154 & \textbf{0.014} & 0.979 & \textbf{0.034} & 0.022 & 0.933 & \textbf{0.025} & 0.028 & 0.836 & 0.115 & \textbf{0.020} & 0.988 & 0.038 & 0.027 \\
Median & 0.916 & 0.039 & \textbf{0.011} & 0.820 & 0.132 & \textbf{0.011} & 0.981 & \textbf{0.034} & 0.022 & 0.933 & 0.026 & \textbf{0.019} & 0.846 & 0.105 & \textbf{0.020} & 0.989 & 0.039 & 0.027 \\
\bottomrule
\end{longtable}
\end{footnotesize}

\clearpage

\vspace{-0.7in}
\section{Model Confidence Set (MCS) Analysis}\label{App:MCS}

We conduct a Model Confidence Set (MCS) analysis following \cite{MCS} to examine whether the point forecast accuracy improvements achieved by our proposed framework are statistically significant. The MCS procedure addresses this by isolating a set of superior models, where the null hypothesis of equal predictive ability cannot be rejected at a significance level of $\alpha = 0.10$.

The evaluation uses Mean Absolute Forecast Error (MAFE) and Root Mean Squared Forecast Error (RMSFE) panels over 10 forecast horizons ($h=1$ to $h=10$). The loss matrix pools all 10 expanding windows and 47 Japanese prefectures simultaneously. Consequently, the resulting heatmaps reflect framework survival across the entire regional panel; a score of 47 represents uniform inclusion in the superior set across all prefectures, while 0 represents absolute statistical elimination.

\subsection{Empirical Results and Analysis}

The MCS survival counts across all forecast horizons are displayed in the heatmaps below. Specifically, Figure~\ref{fig:mcs1} presents the results for the female cohort, while Figure~\ref{fig:mcs2} illustrates the corresponding evaluations for the male cohort.

\subsubsection{ARIMA Base Model Dominance}

Under ARIMA base specifications, the proposed TWA+OWA+FFM framework displays absolute dominance. As shown in the top panels of Figure~\ref{fig:mcs1} and Figure~\ref{fig:mcs2}, across both genders and metrics, TWA+OWA+FFM is the sole survivor with a perfect score of $47$ from horizons $h=1$ to $h=7$, while TWA+MFTS and TWA+FFM are rejected entirely with a score of 0. In the long-run horizons ($h=8$ to $h=10$), TWA+FFM co-survives with a score of $47$ alongside our model for females (Figure~\ref{fig:mcs1}), but TWA+OWA+FFM remains the unique winner with a score of $47$ for males across all $10$ horizons (Figure~\ref{fig:mcs2}).

\begin{figure}[!htb]
\centering
\begin{tabular}{cc}
\includegraphics[width=0.48\textwidth]{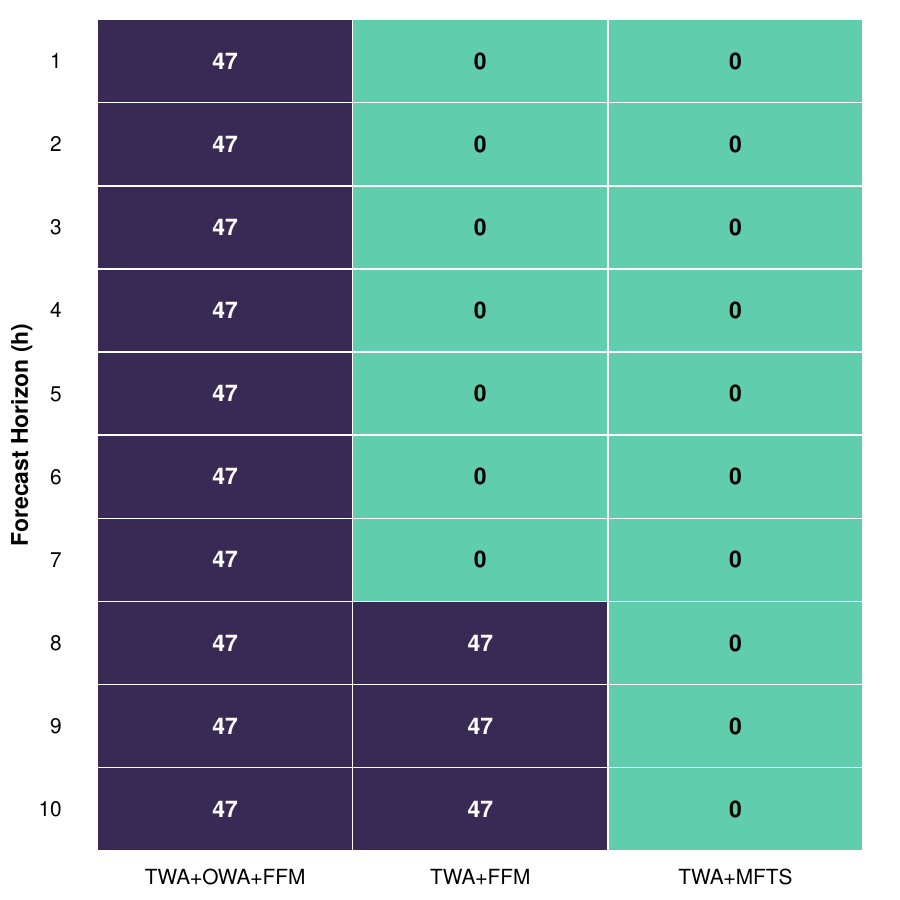} & 
\includegraphics[width=0.48\textwidth]{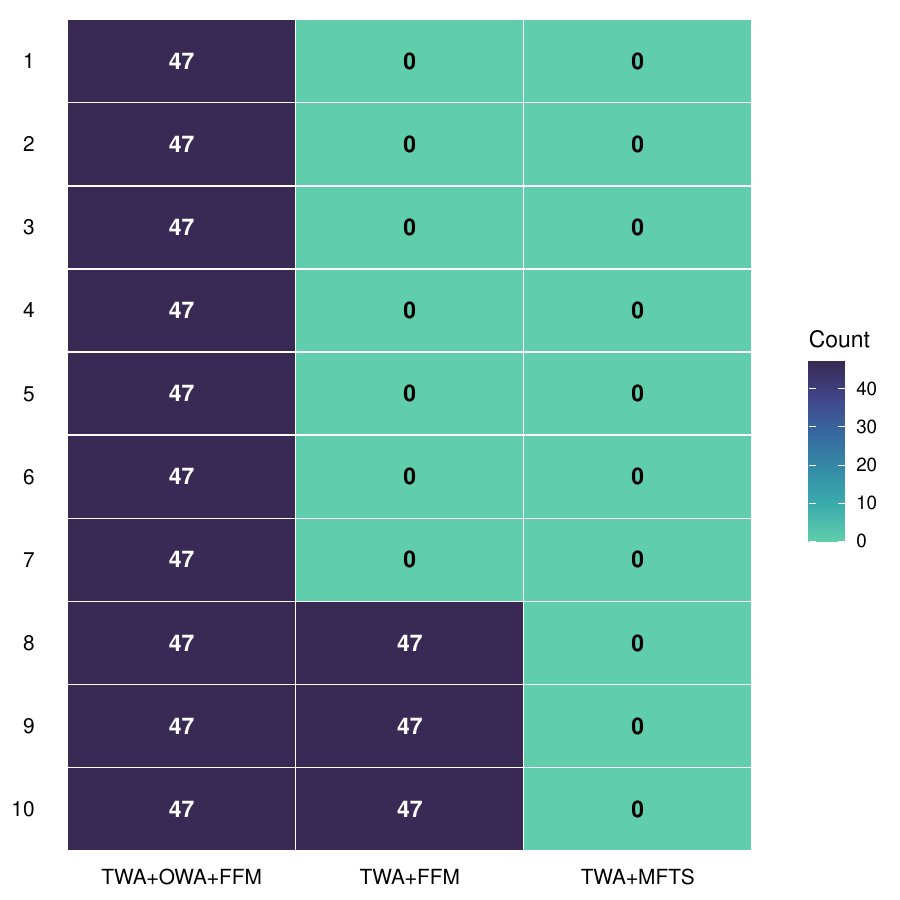} \\
(a) ARIMA -- MAFE Evaluation & (b) ARIMA -- RMSFE Evaluation \\[6pt]
\includegraphics[width=0.48\textwidth]{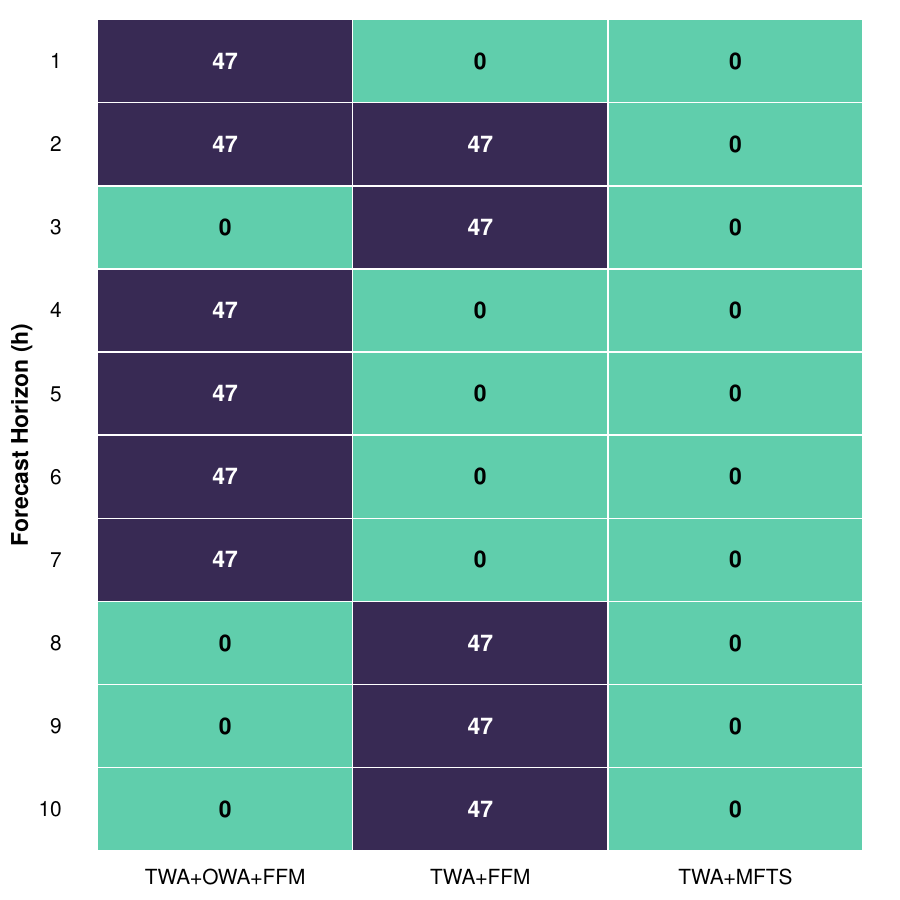} & 
\includegraphics[width=0.48\textwidth]{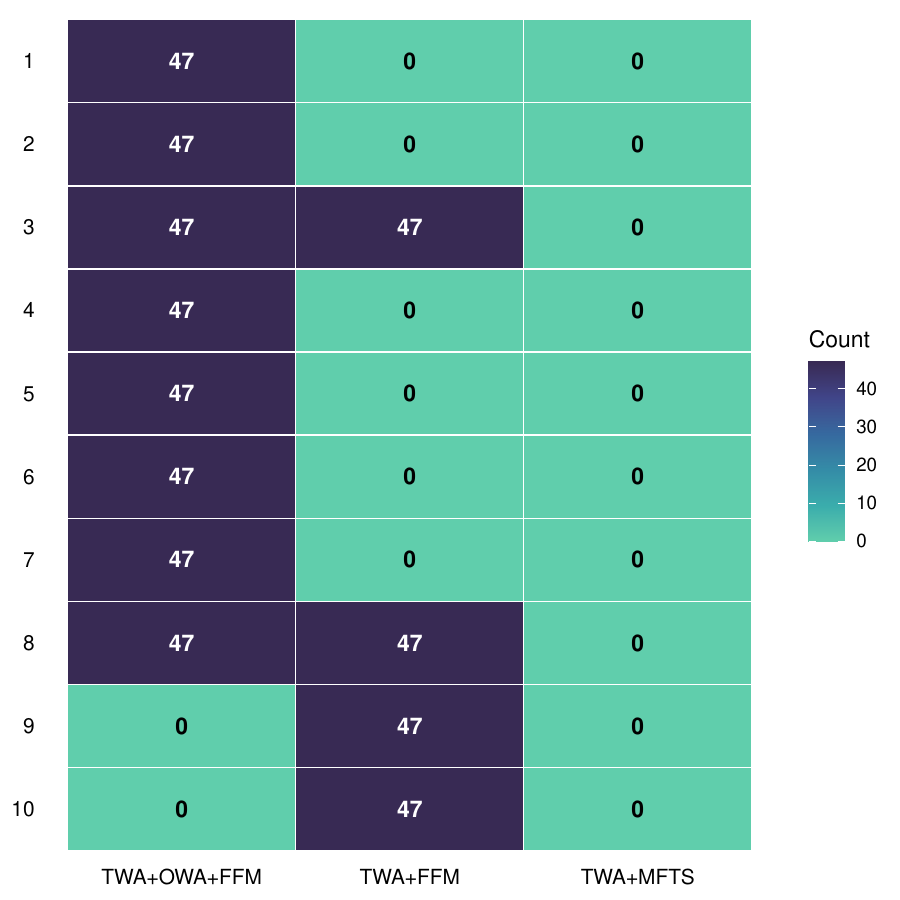} \\
(c) ETS -- MAFE Evaluation & (d) ETS -- RMSFE Evaluation \\
\end{tabular}
\caption{\small Model confidence set (MCS) survival counts for female cohort (significance level $\alpha = 0.10$, implying the 90\% confidence level)}\label{fig:mcs1}
\end{figure}

\subsubsection{ETS Base Model Evaluation}

The ETS specifications, detailed in the bottom panels of Figure~\ref{fig:mcs1} and Figure~\ref{fig:mcs2}, confirm standalone superiority for TWA+OWA+FFM in the short run from horizons $h=1$ to $h=5$. For males (Figure~\ref{fig:mcs2}), localized performance overlaps emerge at horizons $h=6$ to $h=8$ where all three frameworks co-survive with scores of $47$, before TWA+OWA+FFM regains standalone dominance at horizon $h=9$. For females (Figure~\ref{fig:mcs1}), TWA+FFM displays strong longer-horizon performance, matching our model at horizon $h=2$ and $h=8$, and becoming the sole survivor at horizons $h=9$ and $h=10$. The alternative baseline TWA+MFTS is eliminated across almost all horizons for both genders, except during the shared mid-horizon ties observed for males.

\begin{figure}[!htb]
\centering
\begin{tabular}{cc}
\includegraphics[width=0.48\textwidth]{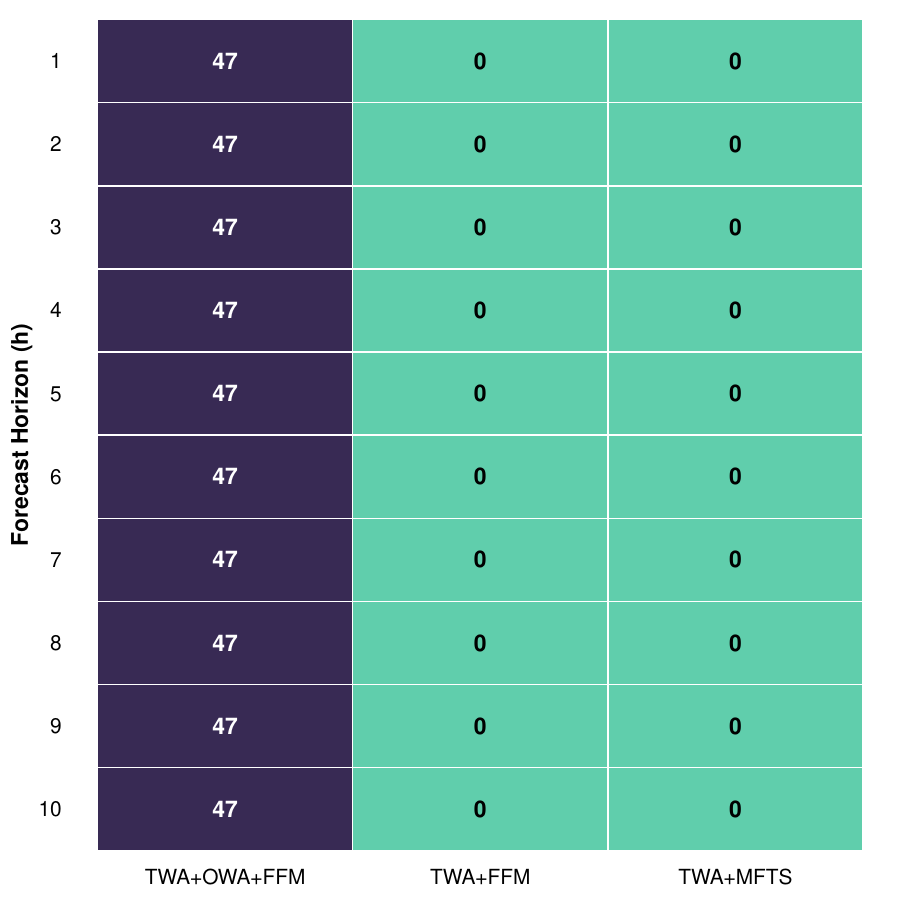} & 
\includegraphics[width=0.48\textwidth]{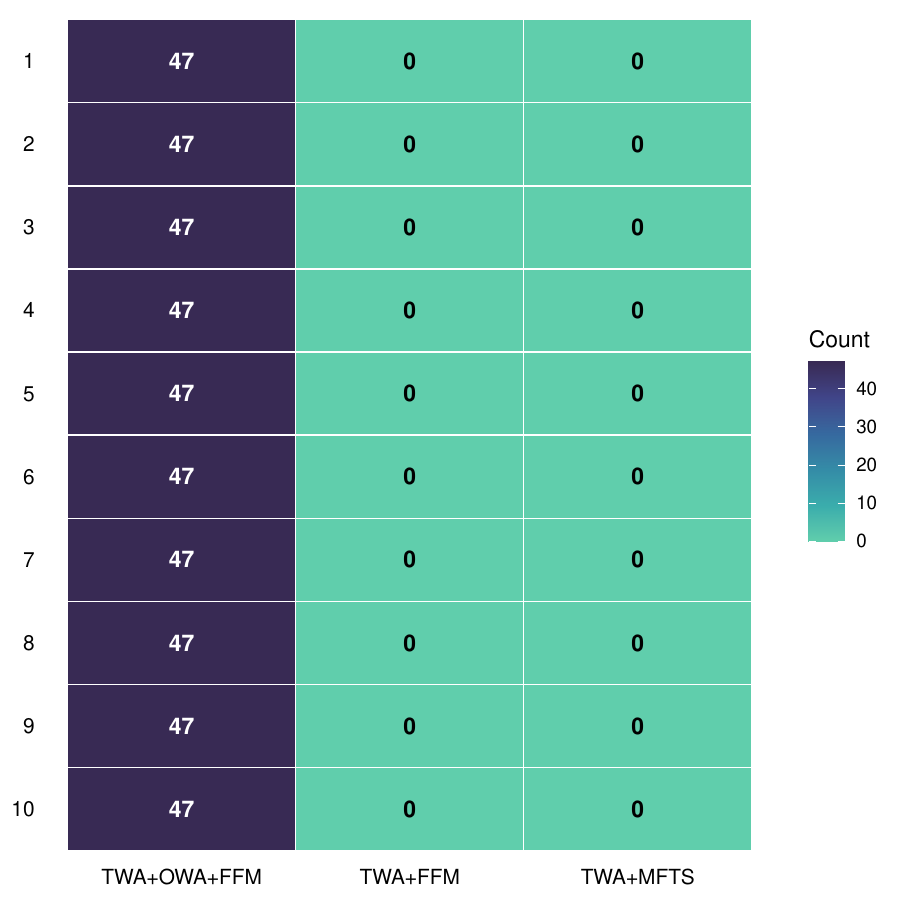} \\
(a) ARIMA -- MAFE Evaluation & (b) ARIMA -- RMSFE Evaluation \\[6pt]
\includegraphics[width=0.48\textwidth]{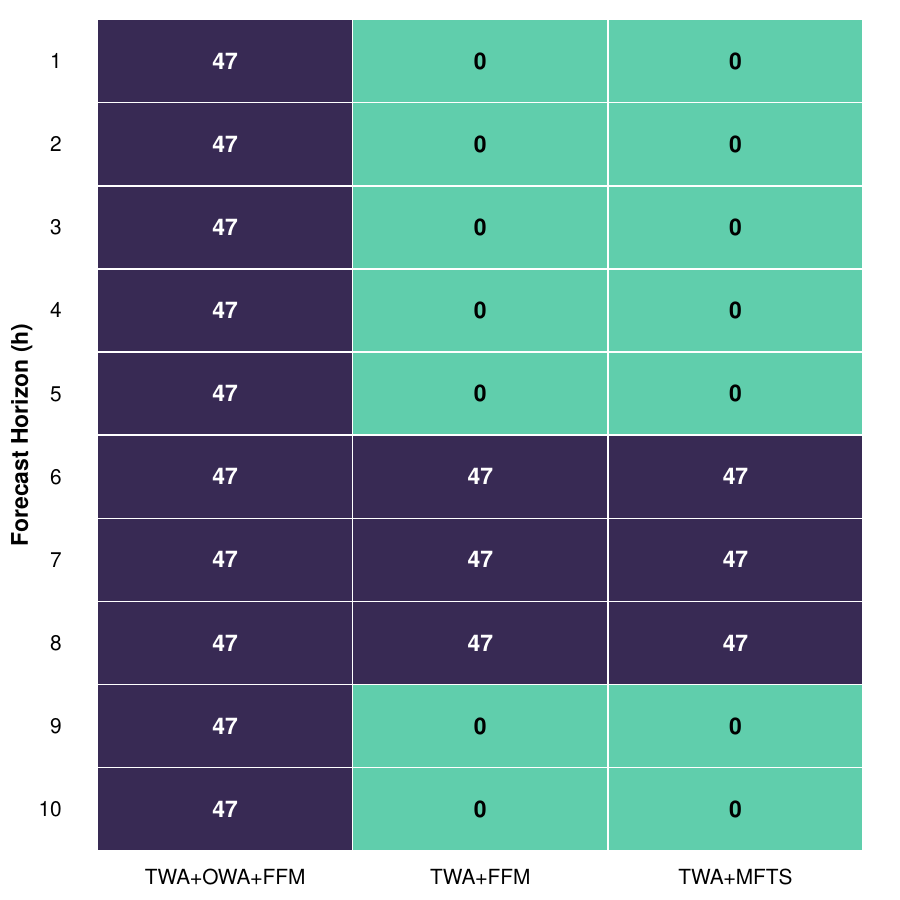} & 
\includegraphics[width=0.48\textwidth]{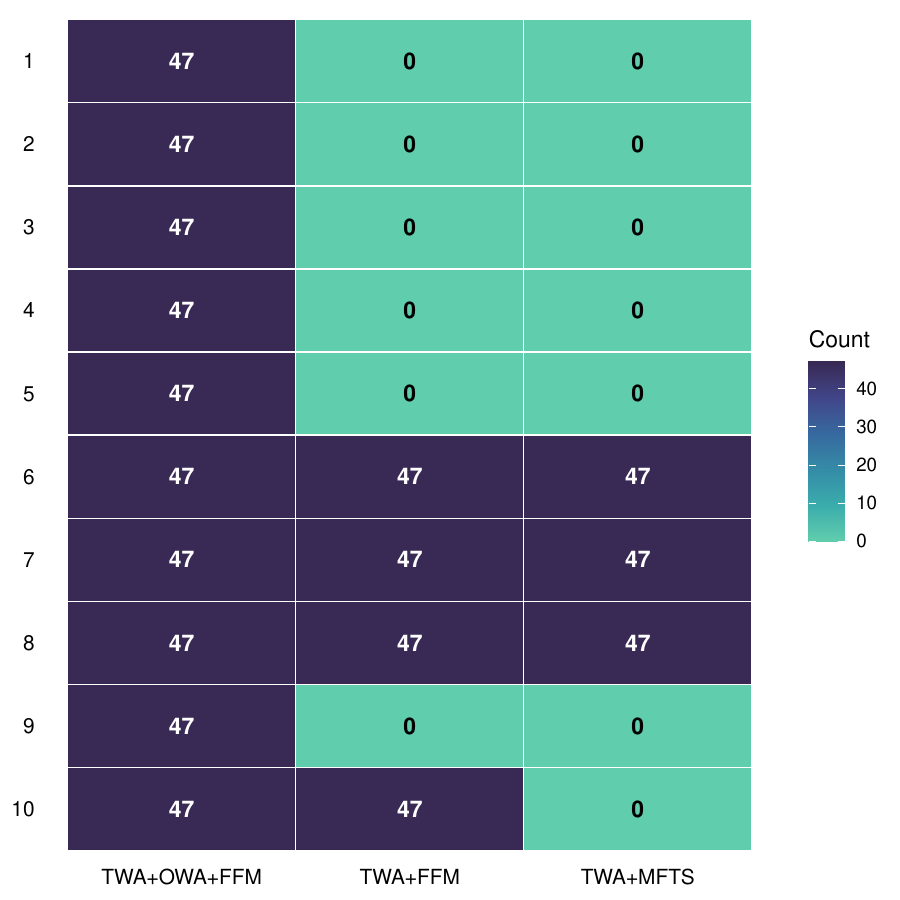} \\
(c) ETS -- MAFE Evaluation & (d) ETS -- RMSFE Evaluation \\
\end{tabular}
\caption{\small Model confidence set (MCS) survival counts for male cohort (significance level $\alpha = 0.10$, implying the 90\% confidence level).}\label{fig:mcs2}
\end{figure}
\clearpage

\vspace{-.67in} 

\section{Automated Model Selection Results}
\label{app:model_specs}

This appendix provides empirical evidence regarding the structural choices made by the automated selection procedures. Specifically, we document the estimated number of functional factors determined by the factor decomposition framework outlined in Section~\ref{sec:3.3}, alongside the univariate parameter specifications chosen for the ARIMA and ETS models across the 10 expanding forecast windows.

\subsection{Factor Selection Results ($\widehat{q}$)}

To evaluate the stability of the model dimension over time, the optimal number of functional factors~($\widehat{q}$) was determined independently at each forecast origin. Following the HDFTS framework, $\widehat{q}$ is selected via an ensemble routine that takes the maximum recommendation of three eigenvalue-based metrics, namely a $k$-criterion measuring consecutive eigenvalue differences, a standard threshold criterion~($\tau = 10^{-3}$), and a novel penalization criterion balancing sample and population dimensions, bounded at an upper limit of~6.

As shown in Table~\ref{tab:factor_stability}, this allocation routine yields a highly consistent factor dimension: the optimal number of factors is stable at $\widehat{q} = 1$ for both female and male mortality series across all 10 horizons under both the proposed TWA-OWA-FFM and the benchmark TWA-FFM frameworks. For the subnational TWA-MFTS framework across the 47 prefectures, $\widehat{q} = 1$ is preferred across all windows for both sexes, with minor exceptions (such as Iwate and Miyagi) where a stable dimension of $\widehat{q} = 2$ is selected for the female series.
\begin{table}[!htb]
\centering
\caption{\small Selected number of factors ($\widehat{q}$) across expanding windows}
\label{tab:factor_stability}
\resizebox{\textwidth}{!}{%
\begin{tabular}{@{}lcccccccccc@{}}
\toprule
\textbf{Framework / Cohort} & \textbf{1} & \textbf{2} & \textbf{3} & \textbf{4} & \textbf{5} & \textbf{6} & \textbf{7} & \textbf{8} & \textbf{9} & \textbf{10} \\ \hline
\textit{TWA-OWA-FFM \& TWA-FFM} & & & & & & & & & & \\
Female $\widehat{q}$ & 1 & 1 & 1 & 1 & 1 & 1 & 1 & 1 & 1 & 1 \\
Male $\widehat{q}$   & 1 & 1 & 1 & 1 & 1 & 1 & 1 & 1 & 1 & 1 \\ 
\midrule
\textit{TWA-MFTS (Selected Prefectures)} & & & & & & & & & & \\
Hokkaido (Male / Female) & 1 / 1 & 1 / 1 & 1 / 1 & 1 / 1 & 1 / 1 & 1 / 1 & 1 / 1 & 1 / 1 & 1 / 1 & 1 / 1 \\
Aomori (Male / Female)   & 1 / 1 & 1 / 1 & 1 / 1 & 1 / 1 & 1 / 1 & 1 / 1 & 1 / 1 & 1 / 1 & 1 / 1 & 1 / 1 \\
Iwate (Male / Female)    & 1 / 2 & 1 / 2 & 1 / 2 & 1 / 2 & 1 / 2 & 1 / 2 & 1 / 2 & 1 / 2 & 1 / 2 & 1 / 2 \\
Miyagi (Male / Female)   & 1 / 2 & 1 / 2 & 1 / 2 & 1 / 2 & 1 / 2 & 1 / 2 & 1 / 2 & 1 / 2 & 1 / 2 & 1 / 2 \\ 
\bottomrule
\end{tabular}%
}
\end{table}

\subsection{ARIMA and ETS Specifications}

Table~\ref{tab:model_specifications} details the specifications selected by the automated univariate forecasting routines in \Rlogo \ (\texttt{auto.arima} and \texttt{ets}). 

\begin{table}[!htb]
\centering
\caption{\small Automated ARIMA and ETS specifications by estimation window}
\label{tab:model_specifications}
\small
\begin{tabular*}{\textwidth}{@{\extracolsep{\fill}}lllllc@{\extracolsep{\fill}}}
\toprule
\textbf{Window} & \textbf{Sex} & \textbf{ARIMA Specification} & \textbf{ETS Specification} & \textbf{Damping} & \textbf{Factor Rank} \\ \midrule
1 & Male   & ARIMA(0,1,1) with drift & ETS(A,Ad,N) & Damped     & 1 \\
1 & Female & ARIMA(0,2,2)            & ETS(A,Ad,N) & Damped     & 1 \\ 
\\
2 & Male   & ARIMA(0,1,1) with drift & ETS(A,Ad,N) & Damped     & 1 \\
2 & Female & ARIMA(0,2,2)            & ETS(A,Ad,N) & Damped     & 1 \\ 
\\
3 & Male   & ARIMA(0,1,1) with drift & ETS(A,Ad,N) & Damped     & 1 \\
3 & Female & ARIMA(0,2,2)            & ETS(A,Ad,N) & Damped     & 1 \\ 
\\
4 & Male   & ARIMA(0,1,1) with drift & ETS(A,Ad,N) & Damped     & 1 \\
4 & Female & ARIMA(0,2,2)            & ETS(A,Ad,N) & Damped     & 1 \\ 
\\
5 & Male   & ARIMA(0,1,1) with drift & ETS(A,A,N)  & Not Damped & 1 \\
5 & Female & ARIMA(0,2,2)            & ETS(A,Ad,N) & Damped     & 1 \\ 
\\
6 & Male   & ARIMA(0,1,1) with drift & ETS(A,A,N)  & Not Damped & 1 \\
6 & Female & ARIMA(0,2,2)            & ETS(A,Ad,N) & Damped     & 1 \\ 
\\
7 & Male   & ARIMA(0,1,1) with drift & ETS(A,Ad,N) & Damped     & 1 \\
7 & Female & ARIMA(0,2,2)            & ETS(A,Ad,N) & Damped     & 1 \\ 
\\
8 & Male   & ARIMA(0,1,1) with drift & ETS(A,Ad,N) & Damped     & 1 \\
8 & Female & ARIMA(0,2,2)            & ETS(A,Ad,N) & Damped     & 1 \\ 
\\
9 & Male   & ARIMA(0,1,1) with drift & ETS(A,Ad,N) & Damped     & 1 \\
9 & Female & ARIMA(0,2,2)            & ETS(A,Ad,N) & Damped     & 1 \\ 
\\
10 & Male  & ARIMA(0,2,2)            & ETS(A,Ad,N) & Damped     & 1 \\
10 & Female& ARIMA(0,2,2)            & ETS(A,Ad,N) & Damped     & 1 \\ \bottomrule
\end{tabular*}
\end{table}

The parameter selections reveal substantial consistency across time:
\begin{itemize}
\item \textbf{ARIMA Specifications:} Under both proposed frameworks (TWA-OWA-FFM and TWA-FFM), the female factor is systematically modeled as an $\text{ARIMA}(0,2,2)$ process. The male factor follows an $\text{ARIMA}(0,1,1)$ with drift model for windows 1 through 9, shifting to an $\text{ARIMA}(0,2,2)$ specification at the final forecast origin. For the subnational TWA-MFTS framework of \cite{JSS24}, an $\text{ARIMA}(0,1,1)$ with drift process is selected for 55.5\% (533 out of 960) of the total paths.
\item \textbf{ETS Specifications and Damping:} The automated ETS framework is represented using the standard $\text{ETS}(E,T,S)$ notation, corresponding to the formulation of the \textbf{E}rror, \textbf{T}rend, and \textbf{S}easonal components respectively. The algorithms consistently isolate non-seasonal configurations with additive errors ($\text{ETS}(A,A,N)$ variants). Given that the dynamic factors are strongly trended, the automated framework selectively implements damped trends ($\text{ETS}(A,A_d,N)$) to avoid explosive long-term extrapolations. A damped configuration is selected in 100\% of female windows and 80\% of male windows. Similarly, under the TWA-MFTS framework, a damped trend specification is automatically preferred in 74.3\% (713 out of 960) of the model fits.
\end{itemize}

\end{document}